\newtheorem{theorem}{Theorem}
\newtheorem{corollary}[theorem]{Corollary}
\newtheorem{definition}{Definition}
\newtheorem{proposition}[theorem]{Proposition}
\newtheorem{remark}{Remark}
\newtheorem*{problem statement}{Problem Statement}
\newcommand{\real}{\mathbb{R}}
\newcommand{\Cc}{{\mathcal{C}}}
\newcommand{\Nc}{{\mathcal{N}}}
\newcommand{\Dc}{{\mathcal{D}}}
\newcommand{\bx}{{\mathbf{x}}}
\newcommand{\bu}{{\mathbf{u}}}
\newcommand{\bc}{{\mathbf{c}}}
\newcommand{\bd}{{\mathbf{d}}}
\newcommand{\bk}{{\mathbf{k}}}
\newcommand{\bF}{{\mathbf{F}}}
\newcommand{\bD}{{\mathbf{D}}}
\newcommand{\bM}{{\mathbf{M}}}
\newcommand\xqed[1]{%
  \leavevmode\unskip\penalty9999 \hbox{}\nobreak\hfill
  \quad\hbox{#1}}
\newcommand\demo{\xqed{$\bullet$}}
\newcommand{\Var}{\operatorname{Var}}
\newcommand{\longthmtitle}[1]{\mbox{}\emph{(#1):}}
\newcommand{\setdef}[2]{\{#1 : #2\}}
\newcommand{\norm}[1]{\left\lVert#1\right\rVert}
\begin{document}
% \title{\LARGE \bf Probabilistic Control Barrier Functions for Discrete-Time Stochastic Systems}
\title{\LARGE \bf Probabilistic Control Barrier Functions: Safety in Probability for Discrete-Time Stochastic Systems}
\author{Pol Mestres, Blake Werner, Ryan K. Cosner, Aaron D. Ames
\thanks{PM, BW, AA are with the Department of Mechanical and Civil
Engineering, California Institute of Technology, Pasadena, CA 91125, USA. RC is with the Department of Mechanical Engineering, Tufts University, Medford, MA 02155, USA.
Emails: \texttt{mestres,ames@caltech.edu}.
This research is supported by Technology Innovation Institute (TII).
}
}
\maketitle

\begin{abstract}
Control systems operating in the real world face countless sources of unpredictable uncertainties. These random disturbances can render deterministic guarantees inapplicable and cause catastrophic safety failures. To overcome this, this paper proposes a method for designing safe controllers for discrete-time stochastic systems that retain probabilistic guarantees of safety.
To do this we modify the traditional notion of a \textit{control barrier function} (CBF) to explicitly account for these stochastic uncertainties and call these new modified functions \textit{probabilistic CBFs}.
% Our starting point is a novel definition of control barrier functions (CBFs) for this class of systems, which we call \textit{probabilistic CBFs}.
We show that probabilistic CBFs can be used to 
design controllers that guarantee safety over a finite number of time steps with a prescribed probability.
% Based on probabilistic CBFs and
Next,
by leveraging various uncertainty quantification methods, such as concentration inequalities and the scenario approach, we provide a variety of sufficient conditions that result in computationally tractable controllers with tunable probabilistic guarantees across a plethora of practical scenarios. 
% can be used to design controllers that satisfy safety constraints with a prescribed probability. 
Finally, we showcase the applicability of our results in simulation and hardware for the control of a quadruped robot.
\end{abstract}

\section{Introduction}

Driven by applications in autonomous driving, robotics, and aerospace systems, the topic of safety has recently received a lot of attention in the control theory community.
Designing and verifying safe controllers in these domains is particularly challenging due to various sources of uncertainty, such as state estimation errors, imperfect perception, and unmodeled dynamics.
Although the study and design of safe controllers that are robust to such sources of uncertainty has received considerable attention in the literature, most existing works assume deterministic and bounded disturbances with known bounds.
Such assumptions are uncommon in practice and lead to overly conservative performance.
In contrast, this work is motivated by the need to design controllers that maintain safety in the presence of stochastic and potentially unbounded disturbances, conditions that naturally arise when employing techniques such as Kalman filtering or simultaneous localization and mapping (SLAM).

\subsection{Literature Review}

There are a variety of tools to design controllers with safety guarantees, including
control barrier functions (CBFs)~\cite{ADA-SC-ME-GN-KS-PT:19}, reachability-based controllers~\cite{SB-MC-SH-CJT:17}, and model predictive control~\cite{JBR-DQM-MMD:17}.
There has been a substantial effort to adapt these techniques for systems with uncertainty. In the CBF literature, this problem has been extensively studied for continuous-time systems and deterministic and bounded disturbances~\cite{SK-ADA:19,AA-AJT-CRH-GO-ADA:22,MJ:18,AA-TGM-ADA-GO:25,YW-XX:23}.
Other works have explored the problem under various uncertainty models, such as stochastic and possibly unbounded~\cite{AC:21-stochastic-cbfs,OS-CF:25,SP-AJ-GJP:04,CS-MD-SC:19,JS-RT:12},
Gaussian process-based~\cite{FC-JJC-BZ-CJT-KS:21,KL-VD-ML-JC-NA:22-ral} or through Wasserstein ambiguity sets~\cite{KL-YY-JC-NA:23-acc,PM-KL-NA-JC:24-csl}.
For discrete-time systems, the works~\cite{CS-MD-SC:19,RKC-PC-ADA:24,JS-RT:12} 
%Also "Safety-Critical Control for Discrete-time Stochastic Systems with Flexible Safe Bounds using Affine and Quadratic Control Barrier Functions", but not peer-reviewed
leverage the theory of martingales to provide bounds on the probability of exiting the safe set within a given time horizon. However, the condition on the control input derived in these papers only leverages the first moment of the distribution and can be conservative if more information about the uncertainty is available or not applicable if such first moment is unknown.
On the other hand,~\cite{MA-XX-ADA:22} introduces the risk-sensitive notion of CVaR safety, and a CBF-based condition that can be used to certify it.
However, controllers based on CVaR safety can be quite conservative in practice, and are only tractable for a small class of dynamics and safety constraints.
Similar risk-aware approaches are taken in~\cite{MB-GF-BH-DP-DP:22,MV-CP-JT:23}.
The recent work~\cite{AADN-AP-KM:24} leverages the scenario approach~\cite{MCC-SG-MP:09} to design controllers that are safe with a prescribed probability by forcing them to be safe for a number of different scenarios (i.e., uncertainty realizations).
Beyond CBF-based approaches, there exist various works in the literature that study the problem of guaranteeing safety constraints for discrete-time systems subject to stochastic uncertainty, particularly in the context of MPC~\cite{SP-TG-AZ-SS:22,KPW-LH-AC-MNZ:22,MF-LG-RS:16}.

\subsection{Statement of Contributions}

We consider the problem of designing safe controllers for discrete-time stochastic systems.
% Given the presence of arbitrary (possibly unbounded) disturbances, we focus on a notion of safety over a finite time horizon.
Our first contribution is the introduction of a probabilistic notion of CBF which we show can be used to design controllers with probabilistic safety guarantees over a finite time horizon.
However, verifying whether a function is a probabilistic CBF is impractical, because it requires knowledge of the uncertainty distribution.
In our second contribution we sidestep this difficulty by introducing two different sufficient conditions which guarantee that a function is a probabilistic CBF. 
These conditions only require knowledge of the moments of the distribution of the value of the CBF at the next time step.
The first condition only leverages a known upper bound on the CBF and the first moment of such distribution, whereas the second condition uses its first and second moments.
We also study various cases under which these conditions are convex in the control input and are thus amenable to optimization-based control design.
Next, we consider the problem of verifying that a function is a probabilistic CBF when moment information is not available but instead we have a dataset consisting of uncertainty realizations. 
In our third contribution we leverage concentration inequalities and scenario optimization to provide different sufficient conditions based on such dataset that ensure that a function is a probabilistic CBF with high confidence.
Finally, we validate our results in both simulation and hardware on quadruped robots.
% To the best of our knowledge, this is the first paper that provides sufficient conditions on the control input that ensure that safety constraints for nonlinear discrete-time stochastic systems are met with a prescribed probability.
Our main contributions are summarized in Table~\ref{tab:contribution-summary}.
\begin{table}[h!]
    \centering
    \begin{tabular}{|c|c|c|c|}
        \hline
        \textbf{Result} & 
        \textbf{Assumption} &
        \textbf{Technique} & \textbf{Guarantee} \\ \hline
        Prop.~\ref{prop:markov-based-condition} & 
        %$1^{\text{st}}$ moment
        Mean
        & Markov's inequality & Single prob. \\ \hline
        Prop.~\ref{prop:probabilistic-CBFs-known-mean-variance} &
        %$1^{\text{st}}$ and $2^{\text{nd}}$ moment
        Mean $+$ variance
        & Cantelli's inequality & Single prob.
        \\ \hline
        Prop.~\ref{prop:probabilistic-cbfs-give-confidence-hoeffding} 
        &
        Samples
        & Hoeffding's inequality & Double prob.
        \\ \hline
        Prop.~\ref{prop:probabilistic-CBFs-given-confidence-scenario-approach} & 
        Samples
        & Scenario approach & Double prob.
        \\ \hline
    \end{tabular}
    \caption{Table summarizing of the main results in the paper. The first column indicates the result,
    the second column the assumptions on the uncertainty needed to apply the result,
    the third column the main technical tool used in its derivation, and the fourth column indicates the type of guarantee it provides, which can be either single probability (as in Definition~\ref{def:delta-probabilistic-CBF}) or double probability (as in Definition~\ref{def:probabilistic-cbf-with-given-confidence}).
    %For the average time computation we use $\sigma = 0.02$ and $H=20$
    }
    \label{tab:contribution-summary}
\end{table}

\subsection{Notation}

We denote by $\mathbb{N}, \real$ the set of natural and real numbers, respectively.
We use bold symbols to represent vectors and matrices and non-bold symbols to represent scalar quantities. 
For $N\in\mathbb{N}$, we let $[N]=\{1,\hdots,N\}$.
Given $\bx\in\real^n$, $\norm{\bx}$ denotes its Euclidean norm.
Given a matrix $\bM\in\real^{n\times n}$, $\text{Tr}(\bM)$ denotes its trace.
For a random variable $Z$, we denote by $\mathbb{E}[Z]$ and $\Var(Z)$ its expected value and variance, respectively.
% The value at risk at level $\delta$ of the variable $Z$ is defined as 
% $\text{VaR}_{\delta}(Z) = \sup\limits_{\zeta\in\real} \setdef{\zeta\in\real}{ \mathbb{P}(Z \leq \zeta) \leq \delta }$ 
% %(cf.~\cite{RTR-SU:00}).
% The conditional value at risk is instead defined as $\text{CVaR}_{\delta}(Z) = \mathbb{E}[ Z | Z \leq \text{VaR}_{\delta}(Z) ]$,
% where \text{CVaR} denotes the conditional value at risk and for any random variable $Z$ is defined as
% \begin{align*}
%     \text{CVaR}_{\delta}(Z) = \inf\limits_{\zeta\in\real} \mathbb{E}
%     \Big[
%     \zeta + \frac{ ( -Z - \zeta )_{+} }{ \delta }
%     \Big].
% \end{align*}
If $Z$ is multivariate, 
$\text{Cov}(Z)$ denotes its covariance matrix.
Given $\delta > 0$, and $k\in\mathbb{N}$ values $R^{(1)}, \hdots, R^{(k)}$ sorted in non-decreasing order, $\text{Quantile}_{1-\delta}(R^{(1)}, \hdots, R^{(k)},\infty)$ denotes the $1-\delta$ quantile of the empirical distribution of the values $R^{(1)}, \hdots, R^{(k)}, \infty$, which can be equivalently obtained as $R^{(p)}$, where $p = \lceil (k+1)(1-\delta) \rceil$, and $\lceil \cdot \rceil$ denotes the ceiling function.

% Given a discrete-time system $\bx_{t+1} = \bF(\bx_t)$, we say that a set $\Cc$ is forward invariant if for all $\bx_0\in\Cc$, it follows that $\bx_t\in\Cc$.

\section{Motivation}\label{sec:motivation}

Throughout this paper we 
consider a discrete-time stochastic system of the form:
\begin{align}\label{eq:discrete-time-system}
    \bx_{t+1} = \bF(\bx_t, \bu_t, \bd_t),
\end{align}
with $\bx_t\in\real^n$ the state, $\bu_t\in\real^m$ the control input, and $\bd_t\in\real^d$ a random variable distributed according to a distribution $\Dc$, i.e., $\bd_t \sim \Dc$.
For simplicity, we assume that $\Dc$ is state and time independent.
% However, the results in the paper can be adapted to the more general setting where the distribution $\Dc$ is time and space dependent.

Given a user-prescribed safe set $\Cc\subset\real^n$, defined as the $0$-superlevel set of a continuous function $h$, (i.e., $\Cc = \setdef{\bx\in\real^n}{h(\bx) \geq 0}$),
our goal is to design a  
state-feedback controller $\bk:\real^n\to\real^m$ that 
ensures that the iterates of the 
closed-loop system obtained from~\eqref{eq:discrete-time-system} by using $\bk$ remain in $\Cc$ at all times.
However, the presence of the stochastic disturbance in~\eqref{eq:discrete-time-system} poses a significant challenge in finding such controller.
In fact, this is impossible if the distribution $\Dc$ is not supported on a bounded set, because the iterates will exit the set $\Cc$ in finite time with probability $1$ (cf.~\cite[Remark 1]{RKC-PC-AJT-ADA:23}).
Hence, we are interested in a notion of safety over a finite time horizon and with a prescribed probability, which we formalize next.
\begin{definition}\longthmtitle{Probabilistic safety over a finite time horizon}\label{def:probabilistic-safety-finite-time-horizon}
    Let $\epsilon\in(0,1)$, and $H\in\mathbb{N}$.
    We say that the controller $\bk:\real^n\to\real^m$ is $\epsilon$-safe over a time horizon $H$ if for any $\bx_0\in\Cc$, the iterates of~\eqref{eq:discrete-time-system} under $\bu_t = \bk(\bx_t)$ are such that 
    \begin{align*}
        \mathbb{P}\Big( \bigcap_{t=0}^H \{ \bx_t \in \Cc \} \Big) \geq 1-\epsilon.
    \end{align*}
\end{definition}

Next we introduce the notion of probabilistic CBF, which will prove to be a useful tool to certify probabilistic safety over a finite time horizon.
\begin{definition}\longthmtitle{Probabilistic CBF}\label{def:delta-probabilistic-CBF}
    Let $\delta\in(0,1)$.
    The function $h:\real^n\to\real$ is a $\delta$-probabilistic CBF if there exists $\alpha\in[0,1]$ such that
    for each $\bx\in\Cc$ there exists $\bu\in\real^m$ such that 
    \begin{align}\label{eq:probability-h-geq0}
        \mathbb{P}\Big( 
        h(\bF(\bx,\bu,\bd)) \geq \alpha h(\bx)
        \Big) \geq 1-\delta,
    \end{align}
\end{definition}
\noindent where $\bd\sim\Dc$.
We note that Definition~\ref{def:delta-probabilistic-CBF} is simply a probabilistic version of the standard CBF condition for discrete-time systems~\cite{AA-KS:17}.
In the sequel, whenever the parameter $\delta$ is clear from the context, we refer to a $\delta$-probabilistic CBF simply as a probabilistic CBF.
%
%Note that in~\cite{AA-KS} they use \Delta B + \gamma B \geq 0. This definition is equivalent to the one we give by just taking \alpha = 1-\gamma.
%
% {\color{cyan} We could also define it as 
% \begin{align*}
%     \mathbb{P}\Big( 
%     h(\bF(\bx,\bu,\bd)) \geq \alpha h(\bx)
%     \Big) \geq 1-\delta.
% \end{align*}
% Any preference?
% }
% Adding $\alpha$ adds some flexibility in the control design (different alpha induce different behaviors)
%And it's still necessary and sufficient for probabilistic safety because we can just take $\alpha = 0$.
We also fix $\alpha\in[0,1]$, $\delta\in(0,1)$ and let $\Delta h:\real^n\times\real^m\times\real^d \to \real$ be defined as 
\begin{align*}
    \Delta h(\bx,\bu,\bd) = h(\bF(\bx,\bu,\bd)) - \alpha h(\bx).
\end{align*}
The following result shows that probabilistic CBFs can be used to verify the notion of probabilistic safety over a given time horizon in Definition~\ref{def:probabilistic-safety-finite-time-horizon}.

\begin{proposition}\longthmtitle{Probabilistic guarantees over a finite time horizon}\label{prop:probabilistic-guarantees-finite-time-horizon}
    Let $\bx_0\in\Cc$, $H\in\mathbb{N}$, and $\bk:\real^n\to\real^m$ be such that $\bu = \bk(\bx)$ satisfies~\eqref{eq:probability-h-geq0} for each $\bx\in\Cc$.
    Then, for the system~\eqref{eq:discrete-time-system} under $\bu_t = \bk(\bx_t)$, we have 
    \begin{align}\label{eq:prob-intersection-1-delta-H}
        \mathbb{P}\Big( \bigcap_{t=0}^H \{ \bx_t \in \Cc \} \Big) \geq (1-\delta)^H.
    \end{align}
    In particular, for any $\epsilon > 0$, if 
    \begin{align}\label{eq:delta-condition}
        \delta \leq 1 - (1-\epsilon)^{\frac{1}{H}},
    \end{align}
    then $\bk$ is $\epsilon$-safe over a time horizon $H$.
\end{proposition}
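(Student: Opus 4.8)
The plan is to exploit two structural facts: that the $0$-superlevel-set definition of $\Cc$ together with $\alpha\in[0,1]$ turns the probabilistic CBF condition into a genuine one-step bound on remaining in $\Cc$, and that the state- and time-independence of $\Dc$ makes the closed loop a Markov chain, so these one-step bounds can be chained multiplicatively. First I would observe that for every $\bx\in\Cc$ we have $h(\bx)\geq 0$, and since $\alpha\in[0,1]$ this gives $\alpha h(\bx)\geq 0$. Consequently the event $\{h(\bF(\bx,\bk(\bx),\bd))\geq \alpha h(\bx)\}$ is contained in $\{h(\bF(\bx,\bk(\bx),\bd))\geq 0\}=\{\bF(\bx,\bk(\bx),\bd)\in\Cc\}$. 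Combining this inclusion with the probabilistic CBF inequality~\eqref{eq:probability-h-geq0} satisfied by $\bu=\bk(\bx)$ yields the pointwise one-step guarantee
\[
\mathbb{P}\big(\bx_{t+1}\in\Cc \mid \bx_t=\bx\big)\geq 1-\delta \quad\text{for all } \bx\in\Cc.
\]

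Next I would chain these bounds by induction. Writing $A_t=\{\bx_t\in\Cc\}$, $B_t=\bigcap_{s=0}^t A_s$, and $p_t=\mathbb{P}(B_t)$, the base case is immediate since $\bx_0\in\Cc$ is deterministic, so $p_0=1$. For the inductive step I would condition on the filtration $\Fc_t$ generated by $\bx_0,\dots,\bx_t$ and apply the tower property, using that $B_t$ is $\Fc_t$-measurable:
\[
p_{t+1}=\mathbb{E}\big[\mathbf{1}_{B_t}\,\mathbb{E}[\mathbf{1}_{A_{t+1}}\mid\Fc_t]\big].
\]
Because $\bk$ is deterministic state feedback and $\bd_t\sim\Dc$ is independent of the history, the closed loop is Markov, so $\mathbb{E}[\mathbf{1}_{A_{t+1}}\mid\Fc_t]=\mathbb{P}(\bx_{t+1}\in\Cc\mid\bx_t)$, which on the event $B_t$ (where $\bx_t\in\Cc$) is at least $1-\delta$ by the one-step guarantee. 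Pulling out the constant gives $p_{t+1}\geq(1-\delta)\,p_t$, and iterating from $p_0=1$ yields $p_H\geq(1-\delta)^H$, which is exactly~\eqref{eq:prob-intersection-1-delta-H}.

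Finally, for the $\epsilon$-safety claim I would simply rearrange~\eqref{eq:delta-condition}: if $\delta\leq 1-(1-\epsilon)^{1/H}$ then $1-\delta\geq(1-\epsilon)^{1/H}$, hence $(1-\delta)^H\geq 1-\epsilon$, and combining with~\eqref{eq:prob-intersection-1-delta-H} gives $\mathbb{P}\big(\bigcap_{t=0}^H A_t\big)\geq 1-\epsilon$, matching Definition~\ref{def:probabilistic-safety-finite-time-horizon}.

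The step I expect to require the most care is the conditioning argument: one must justify that the pointwise-in-$\bx$ CBF bound upgrades to a lower bound on the conditional probability given the \emph{entire} history, which rests precisely on $\Dc$ being state- and time-independent so that the closed loop is a time-homogeneous Markov chain and the one-step inequalities multiply cleanly. The superlevel-set inclusion and the closing algebra are routine; the only genuine subtlety is formalizing the filtration and Markov bookkeeping so that the conditional-expectation step is rigorous.
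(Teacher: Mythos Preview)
Your proposal is correct and follows essentially the same approach as the paper: both use the inclusion $\{h(\bF(\bx,\bk(\bx),\bd))\geq\alpha h(\bx)\}\subset\{\bF(\bx,\bk(\bx),\bd)\in\Cc\}$ for $\bx\in\Cc$ to obtain a one-step bound, then chain these bounds multiplicatively via conditional probabilities (the paper writes the chain rule $\mathbb{P}(\Gamma_H)=\prod_{t=1}^H\mathbb{P}(\bx_t\in\Cc\mid\Gamma_{t-1})$ directly, while you phrase the same computation as an induction using the tower property). Your version is arguably more careful in making explicit the Markov property needed to pass from the pointwise CBF bound to the conditional bound given the full history.
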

\begin{proof}
    Define the event $\Gamma_{t} := \bigcap_{s=0}^t \{ \bx_s \in \Cc \}$.
    By the law of conditional probabilities~\cite[page 63]{SMR:10}
    \begin{align}\label{eq:product-conditional-probs}
        \mathbb{P}\Big(
        \bigcap_{t=0}^H \{ \bx_t \in \Cc \}
        \Big) = \prod_{t=1}^H \mathbb{P}\Big( 
        \bx_t \in \Cc | \Gamma_{t-1} \Big).
    \end{align}
    Note that for each $t\in[H]$, 
    \begin{align*}
        &\mathbb{P}\Big( 
        \bx_t \in \Cc | \Gamma_{t-1} \Big) = \\
        &\mathbb{P}\Big( 
        h(\bF(\bx_{t-1},\bk(\bx_{t-1}),\bd_{t-1})) \geq 0 | \Gamma_{t-1}
        \Big) \geq \\
        &\mathbb{P}\Big( 
        h(\bF(\bx_{t-1},\bk(\bx_{t-1}),\bd_{t-1})) \geq \alpha h(\bx_{t-1}) | \Gamma_{t-1}
        \Big) \geq 1-\delta,
    \end{align*}
    The first inequality follows from the fact that if $\bx_{t-1}\in\Cc$, then $h(\bx_{t-1}) \geq 0$, and 
    $h(\bF(\bx_{t-1},\bk(\bx_{t-1}),\bd_{t-1}) \geq \alpha h(\bx_{t-1}))$ implies that $h(\bF(\bx_{t-1},\bk(\bx_{t-1}),\bd_{t-1}) \geq 0$. 
    The last inequality follows from~\eqref{eq:probability-h-geq0}.
    Hence, by~\eqref{eq:product-conditional-probs} we have that~\eqref{eq:prob-intersection-1-delta-H} holds.
    By taking $\delta$ to satisfy~\eqref{eq:delta-condition}, it follows that $(1-\delta)^{H} \geq 1-\epsilon$, and $\bk$ is $\epsilon$-safe over a time horizon $H$.
\end{proof}

Proposition~\ref{prop:probabilistic-guarantees-finite-time-horizon} shows that controllers obtained from the probabilistic CBF condition~\eqref{eq:probability-h-geq0}
can be used to verify the notion of probabilistic safety introduced in Definition~\ref{def:probabilistic-safety-finite-time-horizon}.
% In the Appendix, we extend Proposition~\ref{prop:probabilistic-guarantees-finite-time-horizon} to the setting where the safe set is defined as the intersection of the superlevel set of multiple functions.
Unfortunately, verifying that $h$ is a probabilistic CBF as per Definition~\ref{def:delta-probabilistic-CBF} is impractical because it requires computing a probability over the distribution $\Dc$, which is often unknown.
In the following sections, we devise various sufficient conditions that can be used to verify Definition~\ref{def:delta-probabilistic-CBF} without requiring full knowledge of the distribution $\Dc$.

% Instead, if one has access to IID samples of $\Dc$, we can define the following alternative to Definition~\ref{def:delta-probabilistic-CBF}:

\section{Verification of Probabilistic CBFs using known Moments}\label{sec:known-random-variable-moments}

In practice, although one generally does not know the full distribution $\Dc$, it is common to have information about some of its moments.
In this section we study how to leverage this information to verify that $h$ is a probabilistic CBF.
Our first result establishes a sufficient condition for that to hold by imposing a constraint on the expected value of  $\Delta h$.
\begin{proposition}\longthmtitle{Markov-based condition}\label{prop:markov-based-condition}
    Suppose that there exists $b > 0$ such that
    for all $\bx\in\Cc$ and $\bu\in\real^m$,
    $\Delta h(\bx,\bu,\bd) \leq b$ almost surely.
    Furthermore, suppose that for all $\bx\in\Cc$, there exists $\bu\in\real^m$ such that
    \begin{align}\label{eq:expectation-condition-markov}
        \mathbb{E}[ \Delta h( \bx,\bu,\bd ) ] \geq b(1-\delta).
    \end{align}
    Then, $h$ is a $\delta$-probabilistic CBF.
\end{proposition}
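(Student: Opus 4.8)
The plan is to reduce the probabilistic CBF condition \eqref{eq:probability-h-geq0} to a tail estimate on $\Delta h$ and then obtain that estimate from a single application of Markov's inequality to a cleverly shifted random variable. Since $\alpha$ is already fixed and $\Delta h(\bx,\bu,\bd) = h(\bF(\bx,\bu,\bd)) - \alpha h(\bx)$, the event $\{h(\bF(\bx,\bu,\bd)) \geq \alpha h(\bx)\}$ coincides with $\{\Delta h(\bx,\bu,\bd) \geq 0\}$. Hence, to verify Definition~\ref{def:delta-probabilistic-CBF}, it suffices to show that for each $\bx\in\Cc$ there exists $\bu\in\real^m$ with $\mathbb{P}(\Delta h(\bx,\bu,\bd) \geq 0) \geq 1-\delta$.

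First I would fix an arbitrary $\bx\in\Cc$ and select the control input $\bu$ whose existence is guaranteed by hypothesis \eqref{eq:expectation-condition-markov}. The key step is to introduce the random variable $Y := b - \Delta h(\bx,\bu,\bd)$. The almost-sure upper bound $\Delta h(\bx,\bu,\bd) \leq b$ is exactly what makes $Y \geq 0$ almost surely, so Markov's inequality becomes applicable to $Y$; this is the role of the boundedness assumption, and recognizing that one should shift by $b$ rather than work with $\Delta h$ directly is the main conceptual point of the argument. Using \eqref{eq:expectation-condition-markov}, I would then compute
\begin{align*}
    \mathbb{E}[Y] = b - \mathbb{E}[\Delta h(\bx,\bu,\bd)] \leq b - b(1-\delta) = b\delta.
\end{align*}

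Applying Markov's inequality to $Y$ at the threshold $b>0$ yields $\mathbb{P}(Y \geq b) \leq \mathbb{E}[Y]/b \leq \delta$. The final step is to unwind the definition of $Y$: the event $\{Y \geq b\}$ equals $\{\Delta h(\bx,\bu,\bd) \leq 0\}$, so $\mathbb{P}(\Delta h(\bx,\bu,\bd) \leq 0) \leq \delta$, and therefore
\begin{align*}
    \mathbb{P}(\Delta h(\bx,\bu,\bd) \geq 0) \geq \mathbb{P}(\Delta h(\bx,\bu,\bd) > 0) = 1 - \mathbb{P}(\Delta h(\bx,\bu,\bd) \leq 0) \geq 1-\delta.
\end{align*}
Since $\bx\in\Cc$ was arbitrary, this establishes \eqref{eq:probability-h-geq0} with the fixed $\alpha$, so $h$ is a $\delta$-probabilistic CBF. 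I do not anticipate a genuine obstacle here; the only point requiring care is the strict-versus-weak inequality at the boundary $\{\Delta h = 0\}$, which is harmless because $\{\Delta h > 0\}\subseteq\{\Delta h \geq 0\}$ only strengthens the bound in the desired direction.
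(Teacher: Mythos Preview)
Your proposal is correct and follows essentially the same route as the paper: shift by $b$ to form the almost-surely nonnegative variable $Y=b-\Delta h(\bx,\bu,\bd)$, apply Markov's inequality at threshold $b$, and use~\eqref{eq:expectation-condition-markov} to bound $\mathbb{E}[Y]\le b\delta$. Your extra remark about the strict-versus-weak inequality at $\{\Delta h=0\}$ is a nice touch the paper leaves implicit, but otherwise the arguments coincide.
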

\begin{proof}
    Note that by assumption, for each $\bx\in\Cc$ there exists $\bu\in\real^m$ and $b  > 0$ such that the random variable $b - \Delta h(\bx,\bu,\bd)$ is positive almost surely. 
    By Markov's inequality~\cite[Theorem 5.11]{AK:13},
    \begin{align*}
        \mathbb{P}( \Delta h(\bx,\bu,\bd) \leq 0 ) &= \mathbb{P}( b  - \Delta h( \bx,\bu,\bd ) \geq b  ) \\
        &\leq \frac{ \mathbb{E}[ b-\Delta h( \bx,\bu,\bd ) ] }{ b }.
    \end{align*}
    Now, from~\eqref{eq:expectation-condition-markov} we have that $\mathbb{E}[b-\nabla h(\bx,\bu,\bd)] \leq b\delta$ and hence
    we have for each $\bx\in\Cc$ there exists $\bu\in\real^m$ such that $\mathbb{P}( \Delta h(\bx,\bu,\bd) \leq 0 ) \leq \delta$, from where it follows that $h$ is a $\delta$-probabilistic CBF.
\end{proof}

%By taking $g$ to be $t^q$, we can use the $q$-th moment.

Since~\eqref{eq:expectation-condition-markov} is derived though Markov's inequality, we refer to it as the \textbf{Markov-based condition}.
Proposition~\ref{prop:markov-based-condition} shows that even if the distribution $\Dc$ is unknown, the expected values of the random variable $\Delta h(\bx,\bu,\bd))$ can be used to verify that $h$ is a probabilistic CBF.
Note also that one simple condition that ensures that $\Delta h$ is uniformly upper bounded is if $h$ is uniformly upper bounded.
%Indeed, $\Delta_{\alpha} = h(F(x,u,d)) - \alpha h(x) \leq h(F(x,u,d)) \leq M$, where $M$ is upper bound on h.

In practice, in order to design a controller that ensures probabilistic safety, we are interested in including input constraints such as~\eqref{eq:expectation-condition-markov} as constraints of an optimization problem.
However, 
as it is also the case for deterministic discrete-time CBFs~\cite{AA-KS:17},
the condition~\eqref{eq:expectation-condition-markov} is generally not convex in $\bu$, which compromises the computational tractability of such optimization-based control design.
The following result 
leverages the convexity properties of $h$ to derive another set of conditions for which the convexity properties with respect to $\bu$ are easier to analyze.
\begin{corollary}\longthmtitle{Markov-based condition under convexity assumptions}\label{cor:use-of-convexity}
    Suppose that there exists $b > 0$ such that for all $\bx\in\Cc$ and $\bu\in\real^m$, 
    $\Delta h(\bx,\bu,\bd) \leq b$ almost surely.
    Further assume that one of the following conditions holds:
    \begin{itemize}
        \item $h$ is convex and for all $\bx\in\Cc$ there exists $\bu\in\real^m$ such that 
        \begin{align}\label{eq:markov-expectation-jensen}
            \tilde{\Delta}h(\bx,\bu) := h( \mathbb{E}[ \bF(\bx,\bu,\bd) ] ) -\alpha h(\bx) \geq b(1-\delta);
        \end{align}
        \item $h$ is concave, $\lambda > 0$ satisfies $\sup\limits_{\bx\in\real^n} \norm{\nabla^2 h(\bx)} \leq \lambda$ and for all $\bx\in\Cc$, there exists $\bu\in\real^m$ such that
        \begin{align}\label{eq:markov-expectation-jensen-gap}
            \hspace{-0.3cm}
            &\hat{\Delta} h(\bx,\bu) \! := \! \tilde{\Delta}h(\bx,\bu) \! - \! 
            \frac{\lambda}{2}\text{Tr}( \text{Cov}(\bF(\bx,\bu,\bd) ) ) \! \geq \! b(1\!-\!\delta),
        \end{align}
    \end{itemize}
    Then, $h$ is a $\delta$-probabilistic CBF.
\end{corollary}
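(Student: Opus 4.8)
The plan is to derive both conclusions from the Markov-based condition already established in Proposition~\ref{prop:markov-based-condition}. The almost-sure upper bound $\Delta h(\bx,\bu,\bd)\le b$ is assumed outright in both cases, so the only thing left to verify is the expectation inequality~\eqref{eq:expectation-condition-markov}, namely $\mathbb{E}[\Delta h(\bx,\bu,\bd)]\ge b(1-\delta)$. Writing $\mathbb{E}[\Delta h(\bx,\bu,\bd)]=\mathbb{E}[h(\bF(\bx,\bu,\bd))]-\alpha h(\bx)$, the entire task reduces to lower bounding $\mathbb{E}[h(\bF(\bx,\bu,\bd))]$ in terms of $h(\mathbb{E}[\bF(\bx,\bu,\bd)])$, after which the two stated hypotheses translate directly into the desired bound on $\mathbb{E}[\Delta h(\bx,\bu,\bd)]$ and Proposition~\ref{prop:markov-based-condition} finishes the argument.

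For the convex case I would invoke Jensen's inequality: since $h$ is convex, $\mathbb{E}[h(\bF(\bx,\bu,\bd))]\ge h(\mathbb{E}[\bF(\bx,\bu,\bd)])$. Subtracting $\alpha h(\bx)$ from both sides and using the definition of $\tilde{\Delta}h$ gives $\mathbb{E}[\Delta h(\bx,\bu,\bd)]\ge \tilde{\Delta}h(\bx,\bu)$, which by~\eqref{eq:markov-expectation-jensen} is at least $b(1-\delta)$. Proposition~\ref{prop:markov-based-condition} then yields the claim, with no differentiability of $h$ required.

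The concave case is the main obstacle, since Jensen's inequality now points the wrong way and only upper bounds $\mathbb{E}[h(\bF(\bx,\bu,\bd))]$. To recover a lower bound I would control the Jensen gap through a second-order Taylor expansion. Setting $\boldsymbol{\mu}:=\mathbb{E}[\bF(\bx,\bu,\bd)]$ and expanding $h$ about $\boldsymbol{\mu}$, Taylor's theorem gives, for each realization of $\bd$, a remainder $\tfrac12(\bF(\bx,\bu,\bd)-\boldsymbol{\mu})^\top\nabla^2 h(\boldsymbol{\xi})(\bF(\bx,\bu,\bd)-\boldsymbol{\mu})$ at some intermediate point $\boldsymbol{\xi}$. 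Upon taking expectations the first-order term drops out because $\mathbb{E}[\bF(\bx,\bu,\bd)-\boldsymbol{\mu}]=\mathbf{0}$. The spectral bound $\norm{\nabla^2 h(\boldsymbol{\xi})}\le\lambda$ forces all eigenvalues of the Hessian to be at least $-\lambda$, so the quadratic remainder is bounded below pointwise by $-\tfrac{\lambda}{2}\norm{\bF(\bx,\bu,\bd)-\boldsymbol{\mu}}^2$. Taking expectations and using $\mathbb{E}[\norm{\bF(\bx,\bu,\bd)-\boldsymbol{\mu}}^2]=\text{Tr}(\text{Cov}(\bF(\bx,\bu,\bd)))$ produces $\mathbb{E}[h(\bF(\bx,\bu,\bd))]\ge h(\boldsymbol{\mu})-\tfrac{\lambda}{2}\text{Tr}(\text{Cov}(\bF(\bx,\bu,\bd)))$. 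Subtracting $\alpha h(\bx)$ identifies the right-hand side as exactly $\hat{\Delta}h(\bx,\bu)$, which by~\eqref{eq:markov-expectation-jensen-gap} is at least $b(1-\delta)$, and Proposition~\ref{prop:markov-based-condition} again closes the argument.

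The only technical points requiring care are the integrability needed to exchange expectation and Taylor remainder, which is guaranteed once $\text{Cov}(\bF(\bx,\bu,\bd))$ is finite, and the observation that the lower bound in the second step in fact uses only the Hessian bound and not concavity itself; concavity guarantees that the correction term $\tfrac{\lambda}{2}\text{Tr}(\text{Cov}(\bF(\bx,\bu,\bd)))$ genuinely accounts for the (nonnegative) Jensen gap rather than over-correcting.
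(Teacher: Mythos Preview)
Your proposal is correct and follows essentially the same approach as the paper: Jensen's inequality for the convex case and a Jensen-gap bound for the concave case, both feeding into Proposition~\ref{prop:markov-based-condition}. The only cosmetic difference is that the paper cites the Jensen-gap inequality $\mathbb{E}[h(\bF(\bx,\bu,\bd))]\ge h(\mathbb{E}[\bF(\bx,\bu,\bd)])-\tfrac{\lambda}{2}\text{Tr}(\text{Cov}(\bF(\bx,\bu,\bd)))$ as a known lemma, whereas you rederive it via the second-order Taylor expansion.
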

\begin{proof}
    First, if $h$ is convex, by Jensen's inequality, we have $h( \mathbb{E}[ \bF(\bx,\bu,\bd) ] ) \leq \mathbb{E}[ h( \bF(\bx,\bu,\bd) ) ]$. Therefore, the satisfaction of~\eqref{eq:markov-expectation-jensen} implies that~\eqref{eq:expectation-condition-markov} holds. By Proposition~\ref{prop:markov-based-condition}, this means that $h$ is a $\delta$-probabilistic CBF.
    Second, if $h$ is concave, by~\cite[Lemma 1]{RKC-PC-ADA:24} (which is based on a result from~\cite{RAB:12}), we have that
    \begin{align}
        \notag
        &\mathbb{E}[ h(\bF(\bx,\bu,\bd)) ] \geq \\
        &h( \mathbb{E}[\bF(\bx,\bu,\bd) ]) - 
        \frac{\lambda}{2}\text{Tr}( \text{Cov}(\bF(\bx,\bu,\bd) ) ).
    \end{align}
    Hence, the satisfaction of~\eqref{eq:markov-expectation-jensen-gap} implies that~\eqref{eq:expectation-condition-markov} holds. By Proposition~\ref{prop:markov-based-condition}, this means that $h$ is a $\delta$-probabilistic CBF.
\end{proof}

The following remark comments on the convexity properties with respect to $\bu$ of the conditions~\eqref{eq:markov-expectation-jensen} and~\eqref{eq:markov-expectation-jensen-gap}.
\begin{remark}\longthmtitle{Convexity properties of constraints~\eqref{eq:markov-expectation-jensen},~\eqref{eq:markov-expectation-jensen-gap}}\label{rem:convexity-optimization-based-control-constraints}
    For general $\bF$ and $h$, conditions~\eqref{eq:markov-expectation-jensen} and~\eqref{eq:markov-expectation-jensen-gap} are not convex in $\bu$.
    In fact, even if $\bF$ is affine in $\bu$, and $h$ is convex,~\eqref{eq:markov-expectation-jensen} is not convex in $\bu$.
    However, in some cases, such as when $h$ is quadratic,~\eqref{eq:markov-expectation-jensen} is a quadratic concave constraint in $\bu$.
    Although it can not be included as a constraint in a convex program, there exist very efficient heuristics to solve non-convex quadratically constrained quadratic programs (QCQPs)~\cite{JP-SB:17-arxiv}.
    %
    %For 1 constraint there is even a closed-form
    %
    On the other hand,~\eqref{eq:markov-expectation-jensen-gap} is convex in $\bu$ if $\bF$ is affine in $\bu$ and $\bd$ is additive, (i.e. $\bF(\bx,\bu,\bd) = F_1(\bx) + F_2 \bu + \bd$, for $F_1:\real^n\to\real^n$ and $F_2\in\real^{n\times m}$) and $h$ is concave.
    Additionally, if $\text{Trace}( \text{Cov}(h(\bF(\bx,\bu,\bd) ])) ) \leq c$ for some $c>0$, $h$ is concave and $\bF$ is affine in $\bu$, then the condition that we get after replacing $\frac{\lambda}{2}\text{Tr}( \text{Cov}(\bF(\bx,\bu,\bd) ) )$ by $\frac{\lambda c}{2}$ in~\eqref{eq:markov-expectation-jensen-gap} is convex in $\bu$.
    %(because $\text{Cov}( \bF(\bx,\bu,\bd) ) = \text{Cov}(\bd)$ ).
    %because Cov(F(x,u,d)) = Cov(Ax + Bu + d) = Cov(d) 
    % and can therefore be included in a convex optimization-based control design.
    % Alternatively, if $c > 0$ is a uniform bound on $\text{Trace}( \text{Cov}(h(\bF(\bx,\bu,\bd) ])) )$ and $\bF$ is affine in $\bu$,
    % the condition 
    % \begin{align*}
    %     h(\mathbb{E}[\bF(\bx,\bu,\bd)] ) - \alpha h(\bx) \geq \frac{\lambda c}{2} + b(1-\delta)
    % \end{align*}
    % is convex in $\bu$ and guarantees that~\eqref{eq:expectation-condition-markov} holds.
    %
    % If $\bF$ is affine in $\bu$ and $h$ is also affine,~\eqref{eq:expectation-condition-markov} and~\eqref{eq:markov-expectation-jensen} are equivalent and are both affine in $\bu$, which means that it can be readily used in a quadratic program (QP) as in~\cite{ADA-SC-ME-GN-KS-PT:19,AA-KS:17}.
    \demo
\end{remark}

The following result provides another sufficient condition for $h$ to be a probabilistic CBF.
In this case, the condition does not require a known upper bound for $\Delta h$ but assumes knowledge of its variance (i.e., its second moment).
\begin{proposition}\longthmtitle{Cantelli-based condition}\label{prop:probabilistic-CBFs-known-mean-variance}
    Suppose that for each $\bx\in\Cc$ there exists $\bu\in\real^m$ such that
    \begin{align}\label{eq:mean-var-inequality}
        \mathbb{E}[ \Delta h(\bx,\bu,\bd) ] - \sqrt{ \Var\Big( \Delta h(\bx,\bu,\bd) \Big) \frac{1-\delta}{\delta} } \geq 0.
    \end{align}
    Then, $h$ is a $\delta$-probabilistic CBF.
\end{proposition}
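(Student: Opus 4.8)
The name of the result points straight at the tool: the plan is to invoke Cantelli's (one-sided Chebyshev) inequality. Fix an arbitrary $\bx\in\Cc$ and let $\bu\in\real^m$ be the input furnished by the hypothesis, so that~\eqref{eq:mean-var-inequality} holds. Abbreviate $Z := \Delta h(\bx,\bu,\bd)$, $\mu := \mathbb{E}[Z]$, and $\sigma^2 := \Var(Z)$. Since $\alpha$ is already fixed, by Definition~\ref{def:delta-probabilistic-CBF} it suffices to exhibit, for this same $\bu$, the bound $\mathbb{P}(Z \geq 0) \geq 1-\delta$, equivalently $\mathbb{P}(Z < 0) \leq \delta$, because $Z\geq 0$ is precisely the event $h(\bF(\bx,\bu,\bd)) \geq \alpha h(\bx)$ appearing in~\eqref{eq:probability-h-geq0}.

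First I would extract the two facts that~\eqref{eq:mean-var-inequality} delivers. The square-root term is nonnegative, so the hypothesis forces $\mu \geq \sigma\sqrt{(1-\delta)/\delta} \geq 0$; squaring gives $\mu^2 \geq \sigma^2(1-\delta)/\delta$. Next I would apply Cantelli's inequality, which asserts that for any $a > 0$ one has $\mathbb{P}(Z-\mu \leq -a) \leq \sigma^2/(\sigma^2 + a^2)$. Taking $a = \mu$ (valid whenever $\mu > 0$) yields $\mathbb{P}(Z \leq 0) \leq \sigma^2/(\sigma^2 + \mu^2)$.

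The remaining step is purely algebraic: substituting $\mu^2 \geq \sigma^2(1-\delta)/\delta$ into the denominator and simplifying shows $\sigma^2/(\sigma^2 + \mu^2) \leq \delta$. Combining with $\mathbb{P}(Z < 0) \leq \mathbb{P}(Z \leq 0)$ then gives $\mathbb{P}(Z < 0) \leq \delta$, hence $\mathbb{P}(Z \geq 0) \geq 1-\delta$, which is exactly~\eqref{eq:probability-h-geq0}. As $\bx\in\Cc$ was arbitrary, $h$ is a $\delta$-probabilistic CBF.

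Honestly there is no deep obstacle here; the only content is recognizing Cantelli as the sharp one-sided bound matched to first- and second-moment information, and the rest is bookkeeping. The care points I would flag are: the degenerate case $\sigma = 0$, where~\eqref{eq:mean-var-inequality} collapses to $\mu \geq 0$, so $Z = \mu \geq 0$ almost surely and the conclusion is immediate (this also covers the boundary $\mu = 0$ where the $a = \mu$ substitution above is not applicable); and being explicit that the half-open event $\{Z < 0\}$ is dominated by $\{Z \leq 0\}$ so that the strict inequality in the CBF event is correctly handled.
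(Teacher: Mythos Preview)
Your proof is correct and follows essentially the same route as the paper: both invoke Cantelli's one-sided inequality and feed in the hypothesis~\eqref{eq:mean-var-inequality}; the only cosmetic difference is that the paper chooses the Cantelli parameter $\kappa=\sqrt{\Var(Z)(1-\delta)/\delta}$ so the right-hand side equals $\delta$ and then uses~\eqref{eq:mean-var-inequality} to compare events, whereas you take $a=\mu$ so the event is $\{Z\le 0\}$ directly and then use~\eqref{eq:mean-var-inequality} to bound the right-hand side. Your explicit handling of the degenerate cases $\sigma=0$ and $\mu=0$ is a nice touch that the paper's version glosses over.
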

\begin{proof}
    By Cantelli's inequality~\cite{FPC:28}, 
    %Cantelli's inequality is like a one-sided Chebyshev inequality
    we have that for any $\bx\in\real^n, \bu\in\real^m$, and $\kappa \geq 0$, 
    \begin{align*}
        &\mathbb{P}\Big( 
        \Delta h(\bx,\bu,\bd) \leq \mathbb{E}[\Delta h(\bx,\bu,\bd)] - \kappa
        \Big) \\
        &\leq \frac{ \Var( \Delta h(\bx,\bu,\bd) )  }{ \Var( \Delta h(\bx,\bu,\bd)) + \kappa^2 }
    \end{align*}
    %We used that Var(-X) = Var(X).
    By taking $\kappa = \kappa_{\bx,\bu} := \sqrt{ \Var\Big( \Delta h(\bx,\bu,\bd) \Big) \frac{1-\delta}{\delta} }$, we have 
    \begin{align*}
        \mathbb{P}\Big( 
        \Delta h(\bx,\bu,\bd) \leq \mathbb{E}[\Delta h(\bx,\bu,\bd)] - \kappa_{\bx,\bu}
        \Big) \leq \delta.
    \end{align*}
    Now, for each $\bx\in\Cc$, select $\bu\in\real^m$ so that~\eqref{eq:mean-var-inequality} holds.
    %Since $\Delta_{\alpha} \leq \mathbb{E}[\Delta_{\alpha}] - \kappa_{\bx,\bu}$
    It follows that 
    \begin{align*}
        &\mathbb{P}\Big(
        \Delta h(\bx,\bu,\bd) \leq 0
        \Big) \leq \\
        &\mathbb{P}\Big(
        \Delta h(\bx,\bu,\bd) \leq \mathbb{E}[ \Delta h(\bx,\bu,\bd) ] - \kappa_{\bx,\bu}
        \Big) \leq \delta.
    \end{align*}
    Hence, $\mathbb{P}\Big( \Delta h(\bx,\bu,\bd) \geq 0 \Big) \geq 1-\delta$,
    and $h$ is a $\delta$-probabilistic CBF.
\end{proof}

Since~\eqref{eq:mean-var-inequality} is derived through Cantelli's inequality, we refer to it as the \textbf{Cantelli-based condition.}
Proposition~\ref{prop:probabilistic-CBFs-known-mean-variance} shows that even if the distribution $\Dc$ is unknown, if the expected value and variance of $\Delta h$ are known, then one can verify that $h$ is a probabilistic CBF.
The following result is an analogue of Corollary~\ref{cor:use-of-convexity} for the Cantelli-based condition.
\begin{corollary}\longthmtitle{Use of convexity properties for Cantelli-based condition}\label{cor:use-of-convexity-for-condition-based-on-known-mean-and-variance}
    Let $\tilde{\Delta} h$ and $\hat{\Delta} h$ be defined as in Corollary~\ref{cor:use-of-convexity} and
    assume that one of the following conditions holds:
    \begin{itemize}
        \item $h$ is convex and for all $\bx\in\Cc$ there exists $\bu\in\real^m$ such that 
        \begin{align}\label{eq:mean-var-ineq-jensen}
        \hspace{-0.5cm}
            \tilde{\Delta}h(\bx,\bu) \geq \sqrt{ \Var\Big( \Delta h(\bx,\bu,\bd) \Big) \frac{1-\delta}{\delta} };
        \end{align}
        \item $h$ is concave, $\lambda > 0$ satisfies $\sup\limits_{\bx\in\real^n} \norm{\nabla^2 h(\bx)} \leq \lambda$ and for all $\bx\in\Cc$, there exists $\bu\in\real^m$ such that
        \begin{align}\label{eq:mean-var-ineq-jensen-gap}
            &\hat{\Delta}h(\bx,\bu) \geq \sqrt{ \Var\Big( \Delta h(\bx,\bu,\bd) \Big) \frac{1-\delta}{\delta} }.
        \end{align}
    \end{itemize}
    Then, $h$ is a $\delta$-probabilistic CBF.
\end{corollary}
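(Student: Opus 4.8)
The plan is to mirror the proof of Corollary~\ref{cor:use-of-convexity} almost verbatim, replacing the Markov-based condition of Proposition~\ref{prop:markov-based-condition} by the Cantelli-based condition of Proposition~\ref{prop:probabilistic-CBFs-known-mean-variance}. The unifying observation is that, in both the convex and the concave cases, the quantity appearing on the left-hand side of the hypothesis ($\tilde{\Delta}h$ or $\hat{\Delta}h$) is a lower bound on $\mathbb{E}[\Delta h(\bx,\bu,\bd)]$. Once this is established, each hypothesis directly implies the mean-variance inequality~\eqref{eq:mean-var-inequality}, and Proposition~\ref{prop:probabilistic-CBFs-known-mean-variance} finishes the argument.

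First I would treat the convex case. Fix $\bx\in\Cc$ and choose $\bu\in\real^m$ satisfying~\eqref{eq:mean-var-ineq-jensen}. By Jensen's inequality, $h(\mathbb{E}[\bF(\bx,\bu,\bd)]) \leq \mathbb{E}[h(\bF(\bx,\bu,\bd))]$, so subtracting $\alpha h(\bx)$ from both sides gives $\tilde{\Delta}h(\bx,\bu) \leq \mathbb{E}[\Delta h(\bx,\bu,\bd)]$. Combining this with~\eqref{eq:mean-var-ineq-jensen} yields the chain
\[
\mathbb{E}[\Delta h(\bx,\bu,\bd)] \geq \tilde{\Delta}h(\bx,\bu) \geq \sqrt{ \Var\Big( \Delta h(\bx,\bu,\bd) \Big) \frac{1-\delta}{\delta} },
\]
which, after rearranging, is exactly the Cantelli-based condition~\eqref{eq:mean-var-inequality}. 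Invoking Proposition~\ref{prop:probabilistic-CBFs-known-mean-variance} then shows that $h$ is a $\delta$-probabilistic CBF.

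Next I would treat the concave case. The only change is to replace Jensen's inequality by the second-order lower bound of \cite[Lemma 1]{RKC-PC-ADA:24}, namely $\mathbb{E}[h(\bF(\bx,\bu,\bd))] \geq h(\mathbb{E}[\bF(\bx,\bu,\bd)]) - \frac{\lambda}{2}\text{Tr}(\text{Cov}(\bF(\bx,\bu,\bd)))$; subtracting $\alpha h(\bx)$ gives precisely $\hat{\Delta}h(\bx,\bu) \leq \mathbb{E}[\Delta h(\bx,\bu,\bd)]$. The same chain of inequalities as above, now using~\eqref{eq:mean-var-ineq-jensen-gap} in place of~\eqref{eq:mean-var-ineq-jensen}, establishes~\eqref{eq:mean-var-inequality}, and Proposition~\ref{prop:probabilistic-CBFs-known-mean-variance} concludes.

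I do not anticipate a genuine obstacle, since the result is an almost immediate analogue of Corollary~\ref{cor:use-of-convexity}. The single point worth verifying is that the variance term $\Var(\Delta h(\bx,\bu,\bd))$ under the square root is the \emph{true} variance of $\Delta h$ in both the hypothesis and in Cantelli's inequality; unlike the mean, it is never replaced by a surrogate. Consequently it suffices to lower-bound only the expectation via the convexity or concavity of $h$, and no separate estimate of the variance (nor any interaction between the Jensen gap and the variance term) is needed.
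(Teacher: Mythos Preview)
Your proposal is correct and matches the paper's approach exactly: the paper simply states that the proof ``follows an argument analogous to that of Corollary~\ref{cor:use-of-convexity},'' and what you have written is precisely that analogous argument, replacing the invocation of Proposition~\ref{prop:markov-based-condition} by Proposition~\ref{prop:probabilistic-CBFs-known-mean-variance}. Your closing observation that only the expectation needs to be lower-bounded (the variance term is used as-is) is the key point that makes the transfer immediate.
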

The proof of Corollary~\ref{cor:use-of-convexity-for-condition-based-on-known-mean-and-variance} follows an argument analogous to that of Corollary~\ref{cor:use-of-convexity}.

\begin{remark}\longthmtitle{Convexity of the associated constraints for known mean and variance}\label{rem:convexity-associated-constraints-known-mean-variance}
    Comments similar to those made in Remark~\ref{rem:convexity-optimization-based-control-constraints} follow regarding conditions~\eqref{eq:mean-var-inequality},~\eqref{eq:mean-var-ineq-jensen},~\eqref{eq:mean-var-ineq-jensen-gap}.
    However, in this case the presence of the term $\Var( \Delta h(\bx,\bu,\bd) )$ can further complicate the convexity guarantees for~\eqref{eq:mean-var-ineq-jensen},~\eqref{eq:mean-var-ineq-jensen-gap}.
    Nonetheless, in the case where 
    $\bd$ is additive and $h$ is affine, 
    $\Var( \Delta h(\bx,\bu,\bd) ) = \Var( h(\bd) )$, and~\eqref{eq:mean-var-ineq-jensen},~\eqref{eq:mean-var-ineq-jensen-gap} have the same dependence in $\bu$ as the one discussed in Remark~\ref{rem:convexity-optimization-based-control-constraints}.
    Additionally, in the case where $\Delta h$ is affine in $\bu$~\eqref{eq:mean-var-inequality} is a second order cone (SOC) constraint, which is convex and can be easily used in a second-order cone program (SOCP) as in~\cite{FC-JJC-BZ-CJT-KS:21,PM-JC:24-auto}.
    \demo
\end{remark}

% The following result provides a sufficient condition for $h$ to be a $\delta$-probabilistic CBF in the case where the first, second, and fourth of $h(\bF(\bx,\bu,\bd))$ are known.
% It directly follows from~\cite{SH-JZ-SZ:10}.

% \begin{proposition}\longthmtitle{Probabilistic CBFs for random variables with known mean, variance and fourth moment}\label{prop:probabilistic-cbfs-for-random-variables-with-known-first-four-moments}
%     Let $\delta\in(0,1)$, and suppose that for each $\bx\in\Cc$, there exists $\bu\in\real^m$ and $v > 0$ such that
%     \begin{align}\label{eq:inequality-first-four-moments}
%         &1-\frac{4}{9}(2\sqrt{3}-3)\Big( 
%         -\frac{2M_1(\bx,\bu)}{v}+\frac{3M_2(\bx,\bu)}{v^2}-\frac{M_4(\bx,\bu)}{v^4}
%         \Big) \\
%         &\leq \delta,
%     \end{align}
%     where $M_k(\bx,\bu) = \mathbb{E}[ ( -\Delta_{\alpha}(\bx,\bu,\bd )^k ]$ for $k\in\mathbb{Z}_{>0}$.
%     Then, $h$ is a $\delta$-probabilistic CBF.
% \end{proposition}

The results in Propositions~\ref{prop:markov-based-condition} and~\ref{prop:probabilistic-CBFs-known-mean-variance}, leverage knowledge of the first and second moments of the distribution of $\Delta h$ to provide bounds on the probability of its tails.
Although it is possible to derive similar (possibly tighter) results when knowledge of higher moments is available (cf.~\cite{KI:64,LV-SB-KC:07,BBB:87} for concentration inequalities using higher moments), this usually leads to conditions with complicated dependencies on the control input $\bu$ and that are computationally not amenable for optimization-based control.

\section{Verification of Probabilistic CBFs Using Data}\label{sec:confidence-based-probabilistic-cbfs}

Section~\ref{sec:known-random-variable-moments}
established a set of results that show how to verify that a function is a probabilistic CBF by leveraging moments of the uncertainty distribution.
In this section, we take an alternative approach that relies on the availability of samples of the uncertainty distribution, rather than knowledge of its moments.
However, since the samples themselves are random, 
in general it is not possible to exactly certify that $h$ is a probabilistic CBF.
Instead, one can only do that with a given confidence. 
The following definition formalizes this idea.
\begin{definition}\longthmtitle{Probabilistic CBF with a given confidence}\label{def:probabilistic-cbf-with-given-confidence}
    Let $\delta,\beta\in(0,1)$, and
    $\bD_N = \{ \bd^{(i)} \}_{i=1}^N$ be $N\in\mathbb{N}$ independent identically distributed samples with $\bd^{(i)} \sim \Dc$.
    The function $h:\real^n\to\real$ is a $\delta$-probabilistic CBF with confidence $1-\beta$ if for each $\bx\in\Cc$, there exists $\bu\in\real^m$ (dependent on the dataset $\bD_N$) such that 
    \begin{align}\label{eq:probabilistic-CBF-given-confidence}
        \mathbb{P}_N \Big( 
        \mathbb{P}( \Delta h(\bx,\bu, \bd) \geq 0 ) \geq 1-\delta
        \Big) \geq 1-\beta,
    \end{align}
    where $\mathbb{P}_N$ denotes the probability with respect to $\bD_N$.
\end{definition}

The following result shows that the notion of CBF in
%of probabilistic CBF with a given confidence in 
Definition~\ref{def:probabilistic-cbf-with-given-confidence} can be used to verify that safety holds over a finite time horizon with a given probability over the stochasticity of~\eqref{eq:discrete-time-system} and confidence over $\bD_N$.

\begin{proposition}\longthmtitle{Safety over a finite time horizon for probabilistic CBFs with given confidence}\label{prop:safety-over-finite-time-horizon-prob-cbfs-given-confidence}
    Let $\epsilon,\tilde{\beta}\in(0,1)$.
    Let $\bD_N = \{ \bd^{(i)} \}_{i=1}^N$ be $N$ independent identically distributed samples with $\bd^{(i)}\sim\Dc$.
    Let $\delta,\beta\in(0,1)$, be such that~\eqref{eq:delta-condition} and $\beta \leq \frac{\tilde{\beta}}{H}$ hold.
    Further assume that $h$ is a $\delta$-probabilistic CBF with confidence  $1-\beta$. 
    Let $\bk:\real^n\to\real^m$ be such that $\bu_t = \bk(\bx_t)$ satisfies~\eqref{eq:probabilistic-CBF-given-confidence} for each $\bx_t\in\Cc$.
    Then, $\bk$ is $\epsilon$-safe over a horizon $H$ with confidence $1-\tilde{\beta}$, i.e., for any $\bx_0\in\Cc$,
    the iterates of~\eqref{eq:discrete-time-system} under $\bk$ satisfy
    \begin{align}\label{eq:finite-horizon-safety-probability-prob-CBF-given-confidence}
        \mathbb{P}_N\Big( 
        \mathbb{P}\Big( 
        \bigcap_{t=0}^H \{ \bx_t\in\Cc \}
        \Big) \geq 1-\epsilon
        \Big) \geq 1-\tilde{\beta}.
    \end{align}
\end{proposition}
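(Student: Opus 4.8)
The plan is to combine the single-horizon argument of Proposition~\ref{prop:probabilistic-guarantees-finite-time-horizon} with a union bound over time that converts the per-state confidence $1-\beta$ into an overall confidence $1-\tilde{\beta}$ over the dataset $\bD_N$. Concretely, I would first isolate a high-probability event in the dataset space on which the controller $\bk$ behaves, for the realized trajectory, like a genuine $\delta$-probabilistic CBF; on that event Proposition~\ref{prop:probabilistic-guarantees-finite-time-horizon} applies essentially verbatim and yields inner safety probability at least $(1-\delta)^H$, which is at least $1-\epsilon$ thanks to~\eqref{eq:delta-condition}. The two probability layers are thus decoupled: the outer layer $\mathbb{P}_N$ is used only to secure the good event, and the inner layer $\mathbb{P}$ is then handled by the deterministic-controller analysis already established.

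For the mechanics, I would introduce, for each of the $H$ transitions $t\in[H]$, the ``bad'' dataset event that the single-step condition $\mathbb{P}(\Delta h(\bx_{t-1},\bk(\bx_{t-1}),\bd)\ge 0)\ge 1-\delta$ fails at the state occupied at time $t-1$. By Definition~\ref{def:probabilistic-cbf-with-given-confidence} each such event has $\mathbb{P}_N$-measure at most $\beta$, so a union bound over the $H$ transitions bounds the probability that any of them fails by $H\beta$, which is at most $\tilde{\beta}$ since $\beta\le\tilde{\beta}/H$. On the complementary event, all $H$ single-step conditions hold simultaneously; expanding the inner probability via the law of conditional probabilities exactly as in~\eqref{eq:product-conditional-probs} and bounding each factor below by $1-\delta$ (using that $\bx_{t-1}\in\Cc$ gives $h(\bx_{t-1})\ge 0$, so $\Delta h\ge 0$ implies $h(\bF)\ge 0$) gives $\mathbb{P}(\bigcap_{t=0}^H\{\bx_t\in\Cc\})\ge(1-\delta)^H\ge 1-\epsilon$, which is precisely the inner inequality in~\eqref{eq:finite-horizon-safety-probability-prob-CBF-given-confidence}.

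The delicate point, which I expect to be the main obstacle, is that each visited state $\bx_{t-1}$ is itself a function of the dataset $\bD_N$ (through both the controller $\bk$ and the trajectory generated by the disturbances), whereas~\eqref{eq:probabilistic-CBF-given-confidence} supplies measure-$\beta$ failure events only for \emph{fixed} states $\bx\in\Cc$. Thus the event ``the single-step condition fails at $\bx_{t-1}$'' is not directly one of the guaranteed events, and a naive substitution $\bx=\bx_{t-1}$ is not justified. To handle this I would condition on the dynamics disturbances $\bd_0,\dots,\bd_{t-2}$, which freezes $\bx_{t-1}$ as a deterministic function of $\bD_N$, and then argue that the failure probability over $\bD_N$ remains at most $\beta$, crucially exploiting the independence of the fresh disturbance $\bd$ appearing in~\eqref{eq:probabilistic-CBF-given-confidence} from $\bD_N$. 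Closing this coupling between the dataset and the random visited states is the crux of the argument; once it is settled, the union bound over the $H$ transitions and the accounting with $\delta$ and $\beta$ (via~\eqref{eq:delta-condition} and $\beta\le\tilde{\beta}/H$) are routine.
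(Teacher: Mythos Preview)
Your approach is essentially the same as the paper's: both decompose the inner probability via the chain rule as in Proposition~\ref{prop:probabilistic-guarantees-finite-time-horizon}, invoke the per-state confidence~\eqref{eq:probabilistic-CBF-given-confidence} at each of the $H$ transitions, and aggregate the $H$ confidence statements by a union bound (the paper writes this as Fr\'echet's inequality $\mathbb{P}_N(\bigcap_t B_{t,N}) \geq \sum_t \mathbb{P}_N(B_{t,N}) - (H-1)$, which is the same thing); the final accounting $1-H\beta \geq 1-\tilde{\beta}$ and $(1-\delta)^H \geq 1-\epsilon$ is identical.

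The ``delicate point'' you flag---that $\bx_{t-1}$ is itself random in both $\bD_N$ and the past disturbances, so~\eqref{eq:probabilistic-CBF-given-confidence} cannot literally be instantiated at a \emph{fixed} $\bx$---is not addressed in the paper's proof either: the paper simply asserts $\mathbb{P}_N(C_{t,N}) \geq 1-\beta$ without further comment. So your conditioning argument is not extra work relative to the paper; it is an attempt to fill a step the paper leaves implicit.
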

\begin{proof}
    As shown in the proof of Proposition~\ref{prop:probabilistic-guarantees-finite-time-horizon} and using the same notation, for any dataset $\bD_N$, we have
    \begin{align*}
        \mathbb{P}\Big( 
        \bigcap_{t=0}^H \{ & \bx_t\in\Cc \}
        \Big) \! \\
        & \geq \! \prod_{t=1}^H
        \mathbb{P}\Big(
        \Delta h(\bx_{t-1},\bk(\bx_{t-1}),\bd_{t-1}) \geq 0 | \Gamma_{t-1}
        \Big).
    \end{align*}
    Hence, if we let $A_N$ be the event
    \begin{align*}
        \prod_{t=1}^H
        \mathbb{P}\Big(
        \Delta h(\bx_{t-1},\bk(\bx_{t-1}),\bd_{t-1}) \geq 0 | \Gamma_{t-1}
        \Big)
        \geq 1-\epsilon,
    \end{align*}
    (defined by the randomness of the dataset $\bD_N$) we have 
    \begin{align}\label{eq:first-inequality}
        \mathbb{P}_N\Big( 
        \mathbb{P}\Big( 
        \bigcap_{t=0}^H \{ \bx_t\in\Cc \}
        \Big) \geq 1-\epsilon
        \Big) \geq \mathbb{P}_N (A_N).
    \end{align}
    Now note that if the dataset $\bD_N$ is such that the event $B_{t,N}$, defined as
    \begin{align*}
        \mathbb{P}\Big(
        \Delta h(\bx_{t-1},\bk(\bx_{t-1}),\bd_{t-1}) \geq 0 | \Gamma_{t-1}
        \Big) \geq (1-\epsilon)^{\frac{1}{H}},
    \end{align*}
    holds for all $t\in[H]$, then $A_N$ also holds.
    Therefore,
    \begin{align}\label{eq:second-inequality}
        &\mathbb{P}_N( A_N )
        \geq \mathbb{P}_N\Big(
        \bigcap_{t=1}^H 
        B_{t,N}
        \Big) \geq \sum_{t=1}^H \mathbb{P}_N (B_{t,N}) - (H-1),
    \end{align}
    where the last inequality follows from Fréchet's inequality~\cite{MF:35}.
    Now, since $1-\delta \geq (1-\epsilon)^{\frac{1}{H}}$,
    %the event P(\Delta h \geq 0) \geq 1-\delta$ implies the event B_{t,N} and hence
    if the event $C_{t,N}$, defined as
    \begin{align*}
        \mathbb{P}\Big(
        \Delta h(\bx_{t-1},\bk(\bx_{t-1}),\bd_{t-1}) \geq 0 | \Gamma_{t-1}
        \Big) \geq 1-\delta,
    \end{align*}
    holds, then $B_{t,N}$ also holds.
    Therefore, $\mathbb{P}_N(B_{t,N}) \geq \mathbb{P}_N(C_{t,N})$.
    Now, since $h$ is a $\delta$-probabilistic CBF with confidence $\beta$, we have $\mathbb{P}_N(C_{t,N}) \geq 1-\beta$ for all $t\in[H]$,
    and~\eqref{eq:first-inequality},~\eqref{eq:second-inequality} we have
    \begin{align*}
        \mathbb{P}_N\Big( 
        \mathbb{P}\Big( 
        \bigcap_{t=0}^H \{ \bx_t\in\Cc \}
        \Big) \geq 1-\epsilon
        \Big) \geq  
        %H(1-\beta)-(H-1) = 
        1-H\beta.
    \end{align*}
    Now the result follows from the fact that $\beta \leq \frac{\tilde{\beta}}{H}$.
\end{proof}

In the rest of the section, we fix $\delta,\beta\in(0,1)$,
and consider a dataset $\bD_N$ of $N\in\mathbb{N}$ samples as defined in Proposition~\ref{prop:safety-over-finite-time-horizon-prob-cbfs-given-confidence}.
The following result establishes a sufficient condition for $h$ to be a probabilistic CBF with a given confidence in the case where the random variable $\Delta h(\bx,\bu,\bd)$ is bounded and has a special separable form.

\begin{proposition}\longthmtitle{Hoeffding-based condition}\label{prop:probabilistic-cbfs-give-confidence-hoeffding}
    Suppose that there exist functions $H_1,H_3:\real^n\times\real^m\to\real$, $H_2,H_4:\real^n\times\real^d\to\real$ such that 
    \begin{align}\label{eq:separability-condition}
        \Delta h(\bx,\bu,\bd) = H_1(\bx,\bu) + H_2(\bx,\bd) + H_3(\bx,\bu) H_4(\bx,\bd).
    \end{align}
    Suppose that for each $\bx\in\Cc$, there exists $\bu\in\real^m$ such that:
    \begin{itemize}
        \item there exist constants $a, b, a_2, b_2, a_4, b_4, b_3$ satisfying $a \leq \Delta h(\bx,\bu,\bd) \leq b$, $a_2 \leq H_2(\bx,\bd) \leq b_2$, $a_4 \leq H_4(\bx,\bd) \leq b_4$, and $|H_3(\bx,\bu)| \leq b_3$ almost surely;
        \item there exist $\epsilon, \epsilon_1, \epsilon_2 > 0$ such that $\epsilon_1 + b_3 \epsilon_2 \leq \epsilon$, $\frac{\beta}{2} \geq 2 \exp \{ -\frac{2 N \epsilon_1^2 }{  (b_2-a_2)^2 } \}$, 
        $\frac{\beta}{2} \geq 2 \exp \{ -\frac{2 N \epsilon_2^2 }{  (b_4-a_4)^2 } \}$, and 
        \begin{align}
            &\frac{1}{N} \sum_{i=1}^N \Delta h(\bx,\bu,\bd^{(i)}) \geq \epsilon + b(1-\delta),~\label{eq:hoeffding-mean} 
        \end{align}
    \end{itemize}
    Then, $h$ is a $\delta$-probabilistic CBF with confidence $1-\beta$.
\end{proposition}
\begin{proof}
    Let $\bx\in\Cc$ and pick $\bu\in\real^m$ so that
    the conditions in the statement hold.
    Note that such $\bu$ is dependent on the dataset $\bD_N$.
    Now, we have
    \begin{align*}
        &\mathbb{P}_N\Big( 
        \Big\rvert\mathbb{E}[\Delta h(\bx,\bu,\bd)] - \frac{1}{N} \sum_{i=1}^N \Delta h(\bx,\bu,\bd^{(i)}) 
        \Big\rvert
        \leq \epsilon
        \Big) \\
        &=\mathbb{P}_N\Big(
        \Big\rvert
        \tilde{H}_2(\bx) + H_3(\bx,\bu)\tilde{H}_4(\bx)
        \Big\rvert \leq \epsilon
        \Big),
    \end{align*}
    where 
    \begin{align*}
        \tilde{H}_2(\bx) = \mathbb{E}[\Delta H_2(\bx,\bd)] - \frac{1}{N} \sum_{i=1}^N H_2(\bx,\bd^{(i)}), \\
        \tilde{H}_4(\bx) = \mathbb{E}[H_4(\bx,\bd)] - \frac{1}{N} \sum_{i=1}^N H_4(\bx,\bd^{(i)}).
    \end{align*}
    Now, let $A_{H,2}$ be the event that $|\tilde{H}_2(\bx)| \leq \epsilon_1$. Since $\frac{\beta}{2} \geq 2 \exp \{ -\frac{2 N \epsilon_1^2 }{  (b_2-a_2)^2 } \}$, by Hoeffding's inequality~\cite{WH:63-hoeffding} we have that 
    $\mathbb{P}_N(A_{H,2}) \geq 1-\frac{\beta}{2}$.
    Similarly, let $A_{H,4}$ be the event that 
    $|\tilde{H}_4(\bx)| \leq \epsilon_2$.
    Since $\frac{\beta}{2} \geq 2 \exp \{ -\frac{2 N \epsilon_2^2 }{  (b_4-a_4)^2 } \}$, by Hoeffding's inequality~\cite{WH:63-hoeffding} we have that 
    $\mathbb{P}_N(A_{H,4}) \geq 1-\frac{\beta}{2}$.
    Therefore, 
    %by Fréchet's inequality
    $\mathbb{P}_N(A_{H,2}\cap A_{H,4}) \geq 1-\beta$.
    Now, note that if the event $A_{H,2}\cap A_{H,4}$ holds, since $\epsilon_1 + b_3 \epsilon_2 \leq \epsilon$, we have that 
    \begin{align*}
        &\Big\rvert\mathbb{E}[\Delta h(\bx,\bu,\bd)] - \frac{1}{N} \sum_{i=1}^N \Delta h(\bx,\bu,\bd^{(i)}) 
        \Big\rvert
        \leq \epsilon
    \end{align*}
    also holds. Therefore, 
    \begin{align*}    
        &\mathbb{P}_N\Big( 
        \rvert\mathbb{E}[\Delta h(\bx,\bu,\bd)] - \frac{1}{N} \sum_{i=1}^N \Delta h(\bx,\bu,\bd^{(i)}) 
        \Big\rvert \leq \epsilon
        \Big) \geq 1-\beta.
    \end{align*}
    This, together with~\eqref{eq:hoeffding-mean}, implies that
    \begin{align*}
        \mathbb{P}_N\Big( 
        \mathbb{E}[ \Delta h(\bx,\bu,\bd) ] \geq b(1-\delta)
        \Big) \geq 1-\beta.
    \end{align*}
    Now, the result follows from Proposition~\ref{prop:markov-based-condition}.
    %Because the event $\mathbb{E}[ \Delta_{\alpha}(\bx,\bu,\bd) ] \geq b_{\bx,\bu}(1-\delta)$ implies the event $\mathbb{P}(\Delta_{\alpha}(x,u,d)\geq 0) \geq 1-\delta$ so the probability P_N of $\mathbb{P}(\Delta_{\alpha}(x,u,d)\geq 0) \geq 1-\delta$ is greater than the probability P_N of $\mathbb{E}[ \Delta_{\alpha}(\bx,\bu,\bd) ] \geq b_{\bx,\bu}(1-\delta)$.
\end{proof}

Since~\eqref{eq:hoeffding-mean} is derived through Hoeffding's inequality, we refer to it as the \textbf{Hoeffding-based condition.}

\begin{remark}\longthmtitle{Conditions of Proposition~\ref{prop:probabilistic-cbfs-give-confidence-hoeffding}}
    The separable form of $\Delta h$ in~\eqref{eq:separability-condition} enables the use of Hoeffding's inequality in the proof. Indeed, if~\eqref{eq:separability-condition} does not hold, since $\bu$ is possibly dependent on $\bD_N$, the quantities $\{ \Delta h(\bx,\bu,\bd^{(i)}) \}_{i=1}^N$ are not necessarily independent and Hoeffding's inequality can not be applied.
    On the other hand, a simple scenario in which $a, b$ as in Proposition~\ref{prop:probabilistic-cbfs-give-confidence-hoeffding} exist is if $h$ is uniformly bounded (which we can always assume without loss of generality). 
    %By simply saturating the function from above and below
    Moreover, $\frac{\beta}{2} \geq 2 \exp \{ -\frac{2 N \epsilon_1^2 }{  (b_2-a_2)^2 } \}$, 
    $\frac{\beta}{2} \geq 2 \exp \{ -\frac{2 N \epsilon_2^2 }{  (b_4-a_4)^2 } \}$ hold if $N$ is sufficiently large.
    We also note that a version of Proposition~\ref{prop:probabilistic-cbfs-give-confidence-hoeffding} can be stated in which the constants 
    % $a, b, a_2, b_2, a_4, b_4, b_3, \epsilon, \epsilon_1,\epsilon_2 N$ are dependent on $\bx$ and $\bu$ 
    defined therein are not uniform and instead depend on $\bx$ or $\bu$.
    Although the assumption that such constants are uniform across $\bx$ and $\bu$ might add some conservatism, it leads to a simpler dependency on $\bu$ in~\eqref{eq:hoeffding-mean}.
    In fact,~\eqref{eq:hoeffding-mean} is convex in $\bu$ under assumptions such as the ones in Remark~\ref{rem:convexity-optimization-based-control-constraints}.
    \demo
\end{remark}

Next we introduce another result that certifies that $h$ is a probabilistic CBF with a given confidence.
In this case, the result is distribution-free (meaning it holds without any assumptions on the uncertainty distribution) and uses the theory of the scenario approach~\cite{MCC-SG-MP:09}.
\begin{proposition}\longthmtitle{Scenario approach-based condition}\label{prop:probabilistic-CBFs-given-confidence-scenario-approach}
    %Let $\delta\in(0,1)$, $\beta\in(0,1)$.
    Suppose that for each $\bx\in\Cc$, there exists $\bu\in\real^m$ such that 
    \begin{align}\label{eq:scenario-based-condition}
        \Delta h(\bx,\bu,\bd^{(i)}) \geq 0, \ i\in[N].
    \end{align}
    Further suppose that for each $\bx\in\Cc$ and $\bd\in\real^d$, the function $\bu \to -\Delta h(\bx,\bu,\bd)$ is convex and 
    $N$ satisfies 
    \begin{align}\label{eq:scenario-condition-N-choose-i}
        \sum_{i=0}^{d-1} \begin{pmatrix}
            N \\ i
        \end{pmatrix} \delta^i (1-\delta)^{N - i } \leq \beta.
    \end{align}
    % \begin{align}\label{eq:Nxu-condition}
    %     N_{\bx,\bu} \geq \frac{2}{\delta}\Big(
    %     \ln\Big(\frac{1}{\beta}\Big) + d
    %     \Big).
    % \end{align}
    % Suppose that either of the following two conditions hold:
    % \begin{enumerate}
    %     \item\label{it:scenario-approach-convex} $\bu \to \Delta_{\alpha}(\bx,\bu,\bd)$ is convex for each $\bx\in\real^n$, $\bd\in\real^d$, and
    %     \begin{align*}
    %         N_{\bx,\bu} \geq \frac{2}{\delta}\Big(
    %         \ln\Big(\frac{1}{\beta}\Big) + d
    %         \Big),
    %     \end{align*}
    %     \item\label{it:scenario-approach-non-convex} the complexity $s^*$ of such set of inequalities (as defined in~\cite[Definition 1]{SG-MCC:24}) and $N_{\bx,\bu}$ are such that the solution $t^*$ in $t$ of the equation
    %     \begin{align*}
    %         \frac{\beta}{N_{\bx,\bu}}\sum_{j=s^*}^{N-1} 
    %         \begin{pmatrix}
    %             j \\ s^*
    %         \end{pmatrix} t^{m-s^*}-
    %         \begin{pmatrix}
    %             N \\ s^*
    %         \end{pmatrix} t^{N-s^*} = 0,
    %     \end{align*}
    %     satisfies $t^* \leq \delta$.
    % \end{enumerate}
    Then, $h$ is a $\delta$-probabilistic CBF with confidence $1-\beta$.
\end{proposition}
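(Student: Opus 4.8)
The plan is to recognize condition~\eqref{eq:scenario-based-condition} as the feasibility requirement of a convex \emph{scenario program} and then invoke the scenario approach guarantee of~\cite{MCC-SG-MP:09}. I would argue pointwise, so fix $\bx\in\Cc$. Since by hypothesis the map $\bu\mapsto-\Delta h(\bx,\bu,\bd)$ is convex for every $\bd$, each sampled requirement $\Delta h(\bx,\bu,\bd^{(i)})\geq 0$ carves out a convex feasible set in the decision variable $\bu$. I would therefore set up the scenario optimization problem that minimizes an arbitrary fixed convex objective (e.g.\ a generic linear functional $\bc^\top\bu$, whose particular choice is immaterial to the guarantee) subject to the $N$ convex constraints $-\Delta h(\bx,\bu,\bd^{(i)})\leq 0$, $i\in[N]$. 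The feasibility hypothesis~\eqref{eq:scenario-based-condition} ensures this program admits a solution, and I would let $\bu^\star_N(\bx)$ denote its optimizer. Note that $\bu^\star_N(\bx)$ depends on the dataset $\bD_N$, exactly as Definition~\ref{def:probabilistic-cbf-with-given-confidence} permits.

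The core step is to apply the scenario theorem to this program. Identifying the violation of a fresh constraint with the event $\{-\Delta h(\bx,\bu^\star_N(\bx),\bd)>0\}$, the violation probability is $V(\bu^\star_N(\bx))=\mathbb{P}(\Delta h(\bx,\bu^\star_N(\bx),\bd)<0)$, and the scenario bound reads
\begin{align*}
    \mathbb{P}_N\Big( V(\bu^\star_N(\bx)) > \delta \Big) \leq \sum_{i=0}^{d-1}\binom{N}{i}\delta^i(1-\delta)^{N-i} \leq \beta,
\end{align*}
where the last inequality is precisely~\eqref{eq:scenario-condition-N-choose-i}. Taking complements and rewriting the event $\{V\leq\delta\}$ as $\{\mathbb{P}(\Delta h(\bx,\bu^\star_N(\bx),\bd)\geq 0)\geq 1-\delta\}$ yields $\mathbb{P}_N\big(\mathbb{P}(\Delta h(\bx,\bu^\star_N(\bx),\bd)\geq 0)\geq 1-\delta\big)\geq 1-\beta$, which is exactly~\eqref{eq:probabilistic-CBF-given-confidence}. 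Since $\bx\in\Cc$ was arbitrary, this establishes that $h$ is a $\delta$-probabilistic CBF with confidence $1-\beta$.

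I expect the main obstacle to be the verification of the technical hypotheses underlying the scenario bound, rather than the algebra of the final rearrangement. Specifically, the theorem of~\cite{MCC-SG-MP:09} requires the scenario program to attain a solution that is unique; I would secure this either by assuming a unique minimizer or, more robustly, by appending a convex tie-breaking rule (such as Tikhonov regularization or a lexicographic ordering) so that $\bu^\star_N(\bx)$ is well-defined for every realization of $\bD_N$. I would also take care to match the upper index of the binomial sum to the number of decision variables of the scenario program, since this is the dimension that enters the scenario guarantee and hence makes~\eqref{eq:scenario-condition-N-choose-i} the relevant sample-complexity condition; under the stated convexity of $\bu\mapsto-\Delta h(\bx,\bu,\bd)$ this is precisely the dimension of the optimization variable.
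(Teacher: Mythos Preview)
Your proposal is correct and follows essentially the same route as the paper: both set up an auxiliary scenario program with an arbitrary linear objective subject to the $N$ sampled constraints, invoke the Campi--Garatti scenario bound to control the violation probability of the optimizer, and then take complements to obtain~\eqref{eq:probabilistic-CBF-given-confidence}. Your treatment is in fact slightly more careful than the paper's, since you explicitly flag the uniqueness/tie-breaking hypothesis and the need for the upper index of the binomial sum to match the dimension of the decision variable~$\bu$.
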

\begin{proof}
    Let $\bc\in\real^m$ and take $\bu_{\bx}\in\real^m$ as a minimizer of the optimization problem
    \begin{align}\label{ref:scenario-lp}
        &\min\limits_{\bu\in\real^m} \bc^\top \bu, \quad \text{s.t.} \ \Delta h(\bx,\bu,\bd^{(i)}) \geq 0, \ i\in[N],
    \end{align}
    which is feasible by assumption.
    Now, by~\cite[Theorem 2.4]{MCC-SG:08}, for each $\bx\in\Cc$, 
    $\mathbb{P}_N\Big( \mathbb{P}(\Delta h(\bx,\bu,\bd) \leq 0 ) > \delta \Big) \leq \beta$.
    Equivalently, 
    \begin{align}\label{eq:scenario-inequality-aux}
        \mathbb{P}_N\Big( 
        \mathbb{P}(\Delta h(\bx,\bu,\bd) \leq 0 ) \leq \delta 
        \Big) \geq 1-\beta.
    \end{align}
    Now, note that for any $\bD_N$, if the event $X_N$ defined as $\mathbb{P}(\Delta h(\bx,\bu,\bd) \leq 0 ) \leq \delta$ holds, then the event $Y_N$, defined as $\mathbb{P}(\Delta h(\bx,\bu,\bd) \geq 0) \geq 1-\delta$ also holds.
    Hence, $\mathbb{P}_N(Y_N) \geq \mathbb{P}_N(X_N)$, and the result follows from~\eqref{eq:scenario-inequality-aux}.
\end{proof}

Since condition~\eqref{eq:scenario-based-condition} is derived through the scenario approach, we refer to it as the \textbf{scenario-based condition}.
The following remark discusses the assumptions in Proposition~\ref{prop:probabilistic-CBFs-given-confidence-scenario-approach}.
\begin{remark}\longthmtitle{Assumptions of Proposition~\ref{prop:probabilistic-CBFs-given-confidence-scenario-approach}}\label{rem:prob-cbfs-scenario}
    As mentioned in~\cite[Theorem 1]{MCC-SG-MP:09}, a sufficient condition for~\eqref{eq:scenario-condition-N-choose-i} to hold is $N \geq \frac{2}{\delta}\Big( \ln\frac{1}{\beta} + d \Big)$.
    Also, although Proposition~\ref{prop:probabilistic-CBFs-given-confidence-scenario-approach} requires $-\Delta_{\alpha}$ to be convex in $\bu$, the recent paper~\cite{SG-MCC:24} shows that a similar result can be obtained in the non-convex setting.
    \demo
\end{remark}

\section{Experimental Validation}

In this section we validate our theoretical results by considering the problem of controlling a quadrupedal robot traversing a narrow corridor, inspired by~\cite[Section 5.D]{RKC-PC-AJT-ADA:23}.
We consider the following discrete-time dynamics
\begin{align}\label{eq:quadruped-dynamics}
    \bx_{k+1} = \bx_k + \Delta t \begin{pmatrix}
        \cos(\theta) & -\sin(\theta) & 0 \\
        \sin(\theta) & \cos(\theta) & 0 \\
        0 & 0 & 1
    \end{pmatrix}
    \begin{pmatrix}
        v_k^x \\ v_k^y \\ \omega_k
    \end{pmatrix} + \bd_k,
\end{align}
where $\bx_k = [x, y, \theta]^\top$ and $\Delta t > 0$.
We let $h(\bx) = 0.5^2 - y^2$.

\subsection{Simulation}
In simulation, we model the uncertainties in the terrain through a Gaussian disturbance with standard deviation $\sigma$ acting only in the $y$ direction, i.e., $\bd_k = [0, d_k, 0]$, with $d_k \sim \Nc(0,\sigma)$.
We consider a time horizon of 20 steps and take $\delta = 0.1$, $\beta = 0.01$, 
$\alpha = 0.01$
%$\alpha = 0.001$, 
and $\Delta t = 0.1$.
We consider a nominal controller $\bk_{\text{nom}}(\bx) = [0.2, 0, -\theta]$. For each of the different conditions ——Markov (in the form of~\eqref{eq:markov-expectation-jensen-gap}), Cantelli, Hoeffding (note that the condition~\eqref{eq:separability-condition} holds in this example), and scenario approach——, we compute the control input at every state through a safety filter, i.e., by solving an optimization problem that finds the closest input to the nominal controller that satisfies the corresponding condition.
We generate datasets of size 3252 and 113, 
%and 300 
for the Hoeffding and scenario approaches, respectively.
%and conformal approaches, respectively.
For the Markov, Hoeffding and scenario approach-based condition, the optimization problem at every state is a convex quadratically constrained quadratic program (QCQP), which we solve using the \texttt{cvxpy} library in Python.
The Cantelli-based condition leads to a non-convex problem, which we solve using the trust region solver of the \texttt{minimize} function in the \texttt{SciPy} library in Python.
% Using the reformulation in~\cite[Section 5]{YZ-XY-MS-JVD-LL:24}, the conformal prediction-based method leads to a mixed-integer quadratically constrained quadratic program, which we solve using the \texttt{gurobipy} library in Python.
We implement each of the approaches (along with the nominal controller and the approach based on~\cite{RKC-PC-AJT-ADA:23}, which we refer to as the Martingale approach) 
%with parameter $\delta$ as defined therein taken as zero.
in 400 different trajectories with initial condition $\bx_0 = [0, 0, 0]$.

Figure~\ref{fig:plot_probabilities} shows the 
number of unsafe trajectories for different values of $\sigma$ (and fixed horizon $20$) and different time horizon values (and fixed $\sigma$).
For a fixed time horizon of 20
Proposition~\ref{prop:probabilistic-guarantees-finite-time-horizon} 
ensures that the probability that the trajectory is safe over 20 steps is at least $(1-\delta)^{20} \approx 0.12$, and hence the probability that the trajectory is unsafe over these 20 steps is at most $0.88$.
(for the approaches in Section~\ref{sec:confidence-based-probabilistic-cbfs}, this is with confidence $1-\beta \cdot 20 = 0.8$).
As can be seen in Figure~\ref{fig:plot_probabilities}, all the approaches considered in this paper lead to a fraction of unsafe trajectories significantly lower than the theoretical bound, which suggests that the conditions might be conservative.
The source of this conservatism can be attributed to the limited knowledge of the uncertainty distribution, the use of Jensen's inequality (cf. Corollaries~\ref{cor:use-of-convexity} and~\ref{cor:use-of-convexity-for-condition-based-on-known-mean-and-variance}), and the fact that these results do not consider full trajectory information, and instead derive probabilistic bounds only based on a condition at every state.
% Figure~\ref{fig:plot_probabilities} also
% showcases a similar analysis for varying trajectory length and fixed $\sigma = 0.09$.
% In this case, all the methods considered in this paper also lead to a number of unsafe trajectories significantly smaller than the one predicted from the theoretical upper bound from Proposition~\ref{prop:probabilistic-guarantees-finite-time-horizon}.

\begin{figure}
    \centering
    \includegraphics[width=0.98\linewidth]{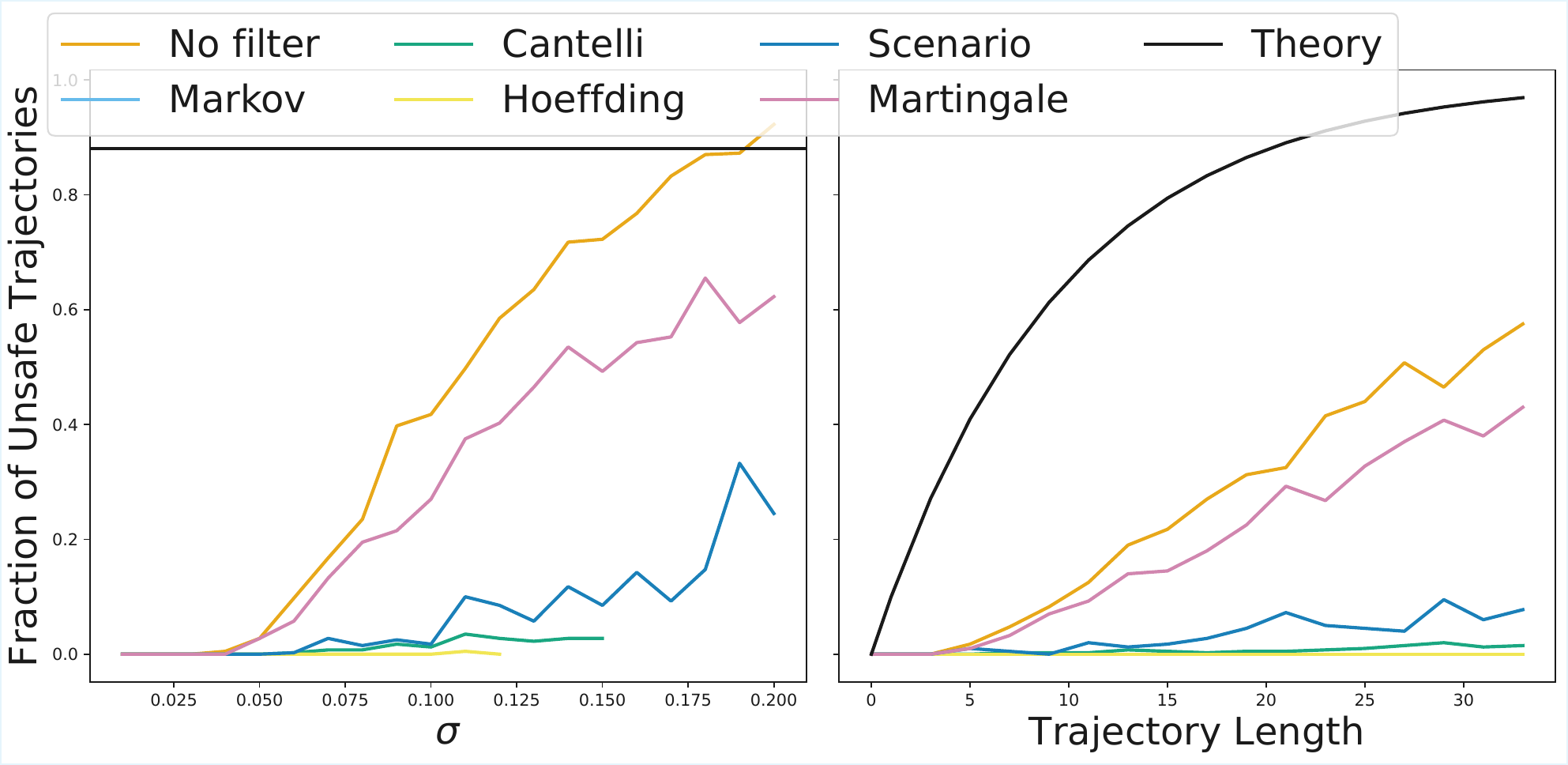}
    \caption{(left) Number of unsafe trajectories for different values of the disturbance variance $\sigma$. (right) Number of unsafe trajectories for different values of the trajectory length.
    No filter refers to the implementation of the nominal controller, whereas Markov, Cantelli, Hoeffding, and Scenario
    %and Conformal 
    refer to the approaches in Propositions~\ref{prop:markov-based-condition},~\ref{prop:probabilistic-cbfs-give-confidence-hoeffding},~\ref{prop:probabilistic-CBFs-given-confidence-scenario-approach},
    % ~\ref{prop:probabilistic-cbfs-with-given-confidence-using-conformal-prediction}, 
    respectively.}
    \label{fig:plot_probabilities}
\end{figure}

% which is $100(1-(1-\delta)^H)$, where $H$ is the trajectory length.
Figure~\ref{fig:trajs-all-approaches} shows 400 different trajectories generated with the different approaches for $\sigma = 0.06$ and trajectory length $20$. Note that we do not plot trajectories from the Markov approach because it is infeasible for such value of $\sigma$. 

\begin{figure*}[t]
    \centering
    % Adjust width so that 5 fit in one row
    \begin{subfigure}{0.19\textwidth}
        \centering
        \includegraphics[width=0.95\linewidth]{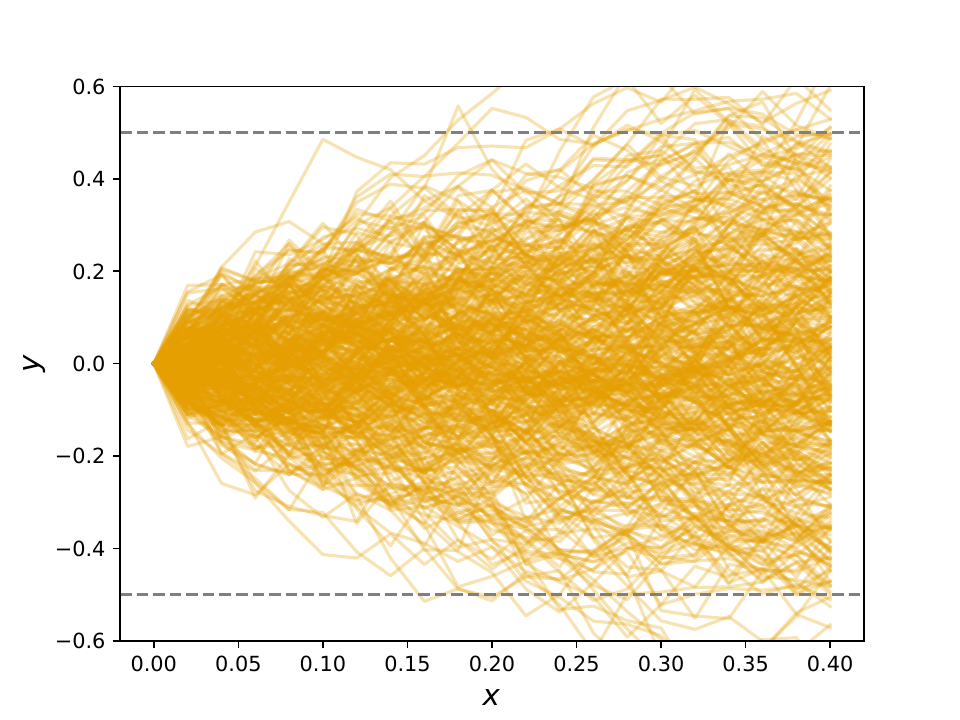}
        \caption{No filter}
        \label{fig:no-filter}
    \end{subfigure}
    \begin{subfigure}{0.19\textwidth}
        \centering
        \includegraphics[width=0.95\linewidth]{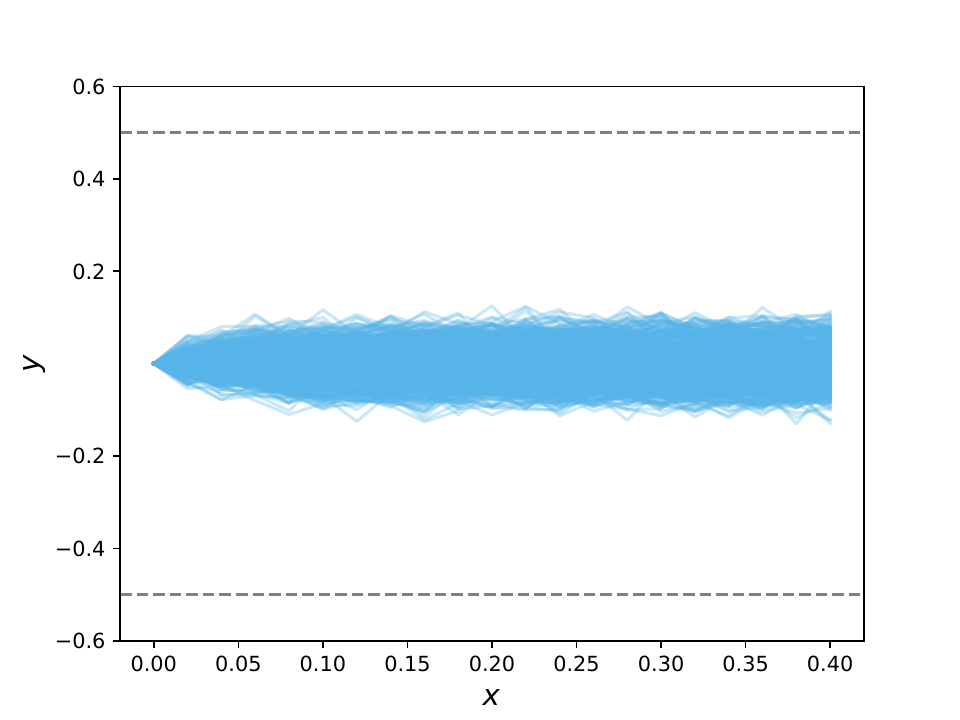}
        \caption{Markov}
        \label{fig:no-filter}
    \end{subfigure}
    \begin{subfigure}{0.19\textwidth}
        \centering
        \includegraphics[width=0.95\linewidth]{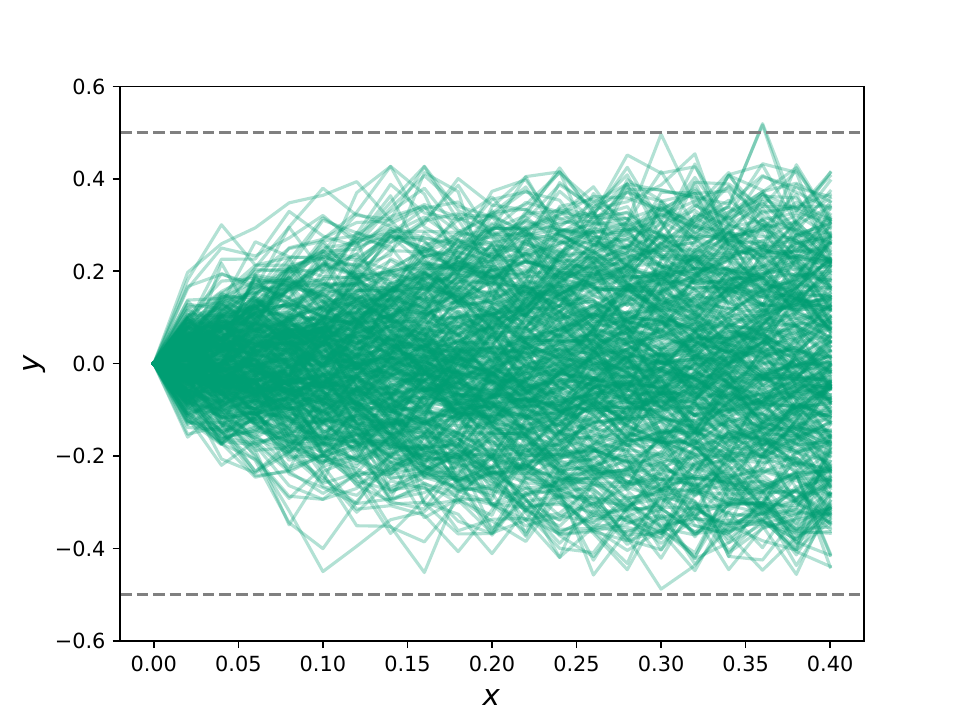}
        \caption{Cantelli}
        \label{fig:cantelli}
    \end{subfigure}
    \begin{subfigure}{0.19\textwidth}
        \centering
        \includegraphics[width=0.95\linewidth]{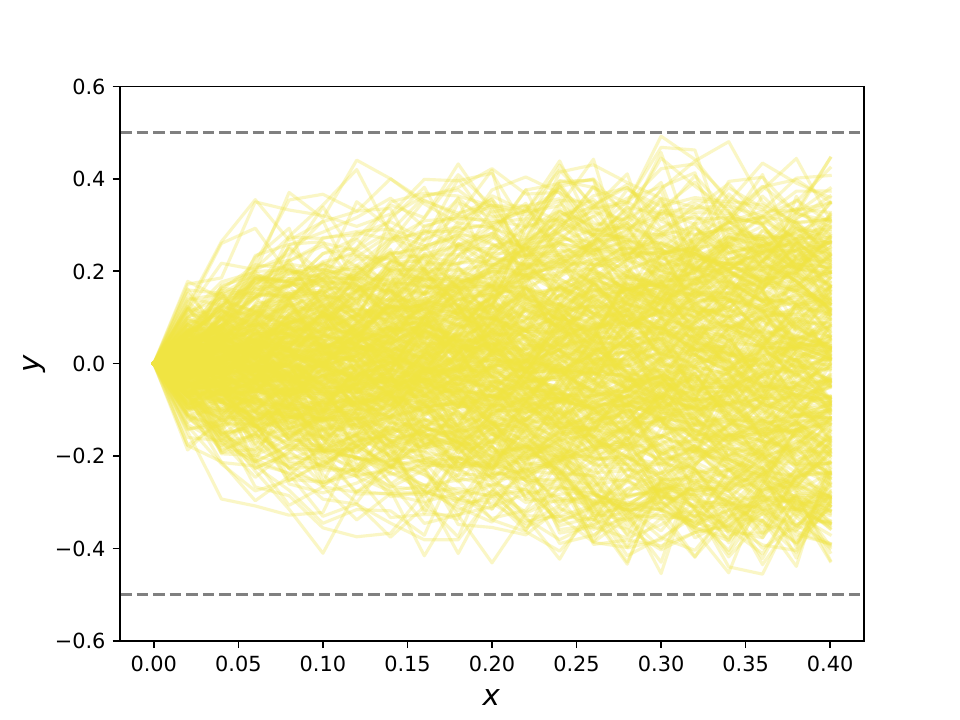}
        \caption{Hoeffding}
        \label{fig:hoeffding}
    \end{subfigure}
    \begin{subfigure}{0.19\textwidth}
        \centering
        \includegraphics[width=0.95\linewidth]{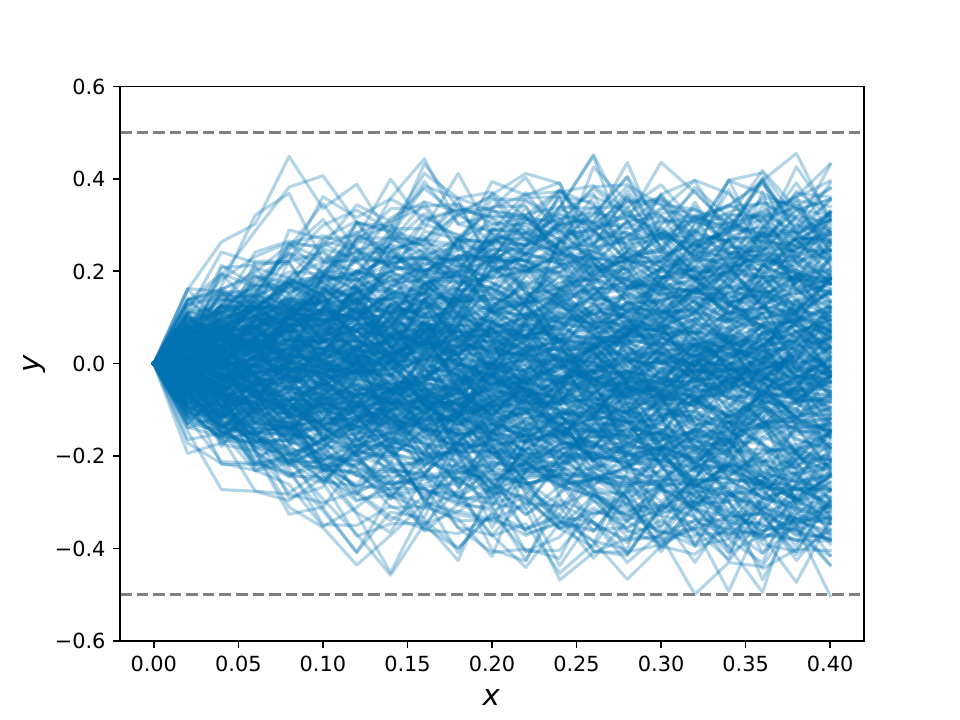}
        \caption{Scenario}
        \label{fig:scenario}
    \end{subfigure}
    % \begin{subfigure}{0.19\textwidth}
    %     \centering
    %     \includegraphics[width=0.95\linewidth]{figures/conformal_ynoise_sigma_006_beta_001.pdf}
    %     \caption{Conformal}
    %     \label{fig:conformal}
    % \end{subfigure}

    \caption{Evolution of 400 different trajectories with initial condition at $\bx_0 = [0, 0, 0]$ and $\sigma = 0.06$ ($\sigma = 0.02$ for Markov) and trajectory length $20$ for different methods introduced in the paper, along with the nominal controller (left).}
    \label{fig:trajs-all-approaches}
\end{figure*}

\begin{table}[h!]
    \centering
    \begin{tabular}{|@{}c@{}|c|c|c|c|c|}
        \hline
        & Markov & Cantelli & Hoeffding & Scenario & Martingale \\ \hline
        $T_{ex}$ & 3.47 & 3.78 & 3.89 & 79.96 & 3.46 \\ \hline
        $\sigma_0$ & 0.03 & 0.16 & 0.13 & 0.21 & 0.25 \\ \hline
    \end{tabular}
    \caption{Average time it takes to compute the controller ($T_{ex}$) in milliseconds and smallest value of $\sigma$ that leads to infeasibility $\sigma_0$ for the different approaches.
    %For the average time computation we use $\sigma = 0.02$ and $H=20$
    }
    \label{tab:exec-times-infeas}
\end{table}

Finally, Table~\ref{tab:exec-times-infeas} shows the average execution times for the different approaches as well as the smallest value of $\sigma$
%computed numerically
that leads to infeasibility of the corresponding condition.

The choice of which of the different approaches should be used depends on a variety of factors. Firstly, on 
what information is available about the uncertainty distribution.
Secondly, on the desired level of conservativeness with respect to the safety constraints. Figure~\ref{fig:plot_probabilities} and Table~\ref{tab:exec-times-infeas} show that there is a tradeoff between the empirical number of safety violations and
the smallest variance that leads to infeasibility.
Approaches yielding less safety violations generally lead to infeasible conditions for smaller values of $\sigma$.
Thirdly, another point to consider is the computational time (cf. Table~\ref{tab:exec-times-infeas}). 
% For example, although in our example the scenario approach is feasible for larger magnitudes of $\sigma$, it has larger computational time.
% Therefore, the choice of which approach to use is very application dependent, and is influenced by available information about the disturbance, computational capabilities of the used platform, and feasibility of the corresponding condition.
Although the martingale approach shows a larger value of $\sigma_0$ in Table~\ref{tab:exec-times-infeas}, it fails to meet the prescribed safety guarantees that the other approaches successfully satisfy.
%At least this is not proven in the paper. Maybe you can always find gamma, delta, alpha (as defined there) so that this holds.

% Rather, a safety guarantee that holds as a function of the problem parameters.

% {\color{cyan}-Suppose that the probability distribution function is known, and we know the moment generating function and we can compute its minimum.
% Compare how conservative the Markov/Chebyshev/4 moments based bounds are.}

% {\color{cyan}-Suppose that the mean and variance are known. How conservative is the Markov-based bound compared to the Chebyshev-based bound?}

% {\color{cyan}-Compare with the martingale approach from "Robust Safety under Stochastic Uncertainty with Discrete-Time Control Barrier Functions"}

% {\color{cyan}-Ideas for robotic platforms: model quadruped as unicycle with gaussian noise as in "Robust Safety under Stochastic Uncertainty with Discrete-Time Control Barrier Functions".
% Also, bipedal obstacle avoidance as in "Bounding Stochastic Safety: Leveraging Freedman’s Inequality with Discrete-Time Control Barrier Functions".}

%Maybe: nominal controller tracks some target trajectory. Compare how well the different conditions track it (define some metric for trackability)

%\addcontentsline{toc}{chapter}{Bibliography}
% \bibliographystyle{ieee}
% \bibliographystyle{IEEEtran}

\subsection{Hardware}

\begin{figure}
    \centering
    \includegraphics[width=0.85\linewidth]{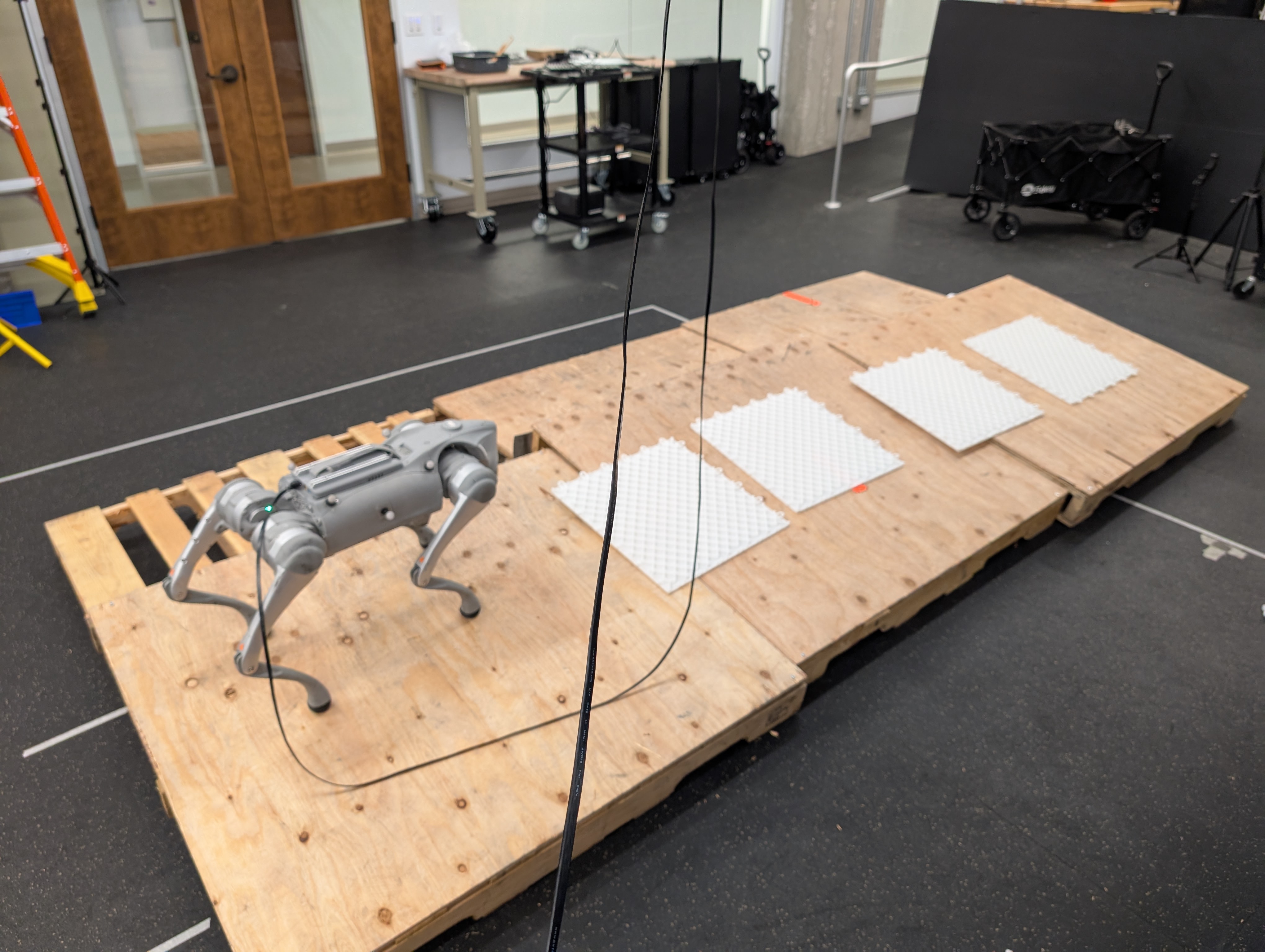}
    \caption{Experimental Setup}
    \label{fig:experimental_setup}
\end{figure}
To validate the effectiveness of our approach beyond simulation, we deploy it on a quadruped robot navigating a challenging obstacle course designed to induce substantial disturbances. The course is 3m long and 1m wide, with a laterally slanted wooden ramp covered in slippery plates (Fig. \ref{fig:experimental_setup}). Our system is a Unitree GO2 quadruped, which we assume follows dynamics~\eqref{eq:quadruped-dynamics}.
The control objective is to traverse the course with a nominal forward velocity of 1 m/s.
We compare three controllers: (i) the nominal controller with no safety filter, (ii) a naive CBF-based filter, and (iii) our Hoeffding-based probabilistic CBF filter~\footnote{Video of the experiment: https://youtu.be/Lu9tl-teSLc}. For the latter, we collect disturbance data from 5000 rollout steps on the course to characterize the uncertainty. 
% The safety filter is defined as in simulation,
% \begin{align}
%     h(\bx_{k}) = 0.5^2 - y^2
% \end{align}
We use the same $h$ as in simulation, and estimate the global state of the robot by an external OptiTrack system.
\begin{figure}
    \centering
    \includegraphics[width=0.9\linewidth]{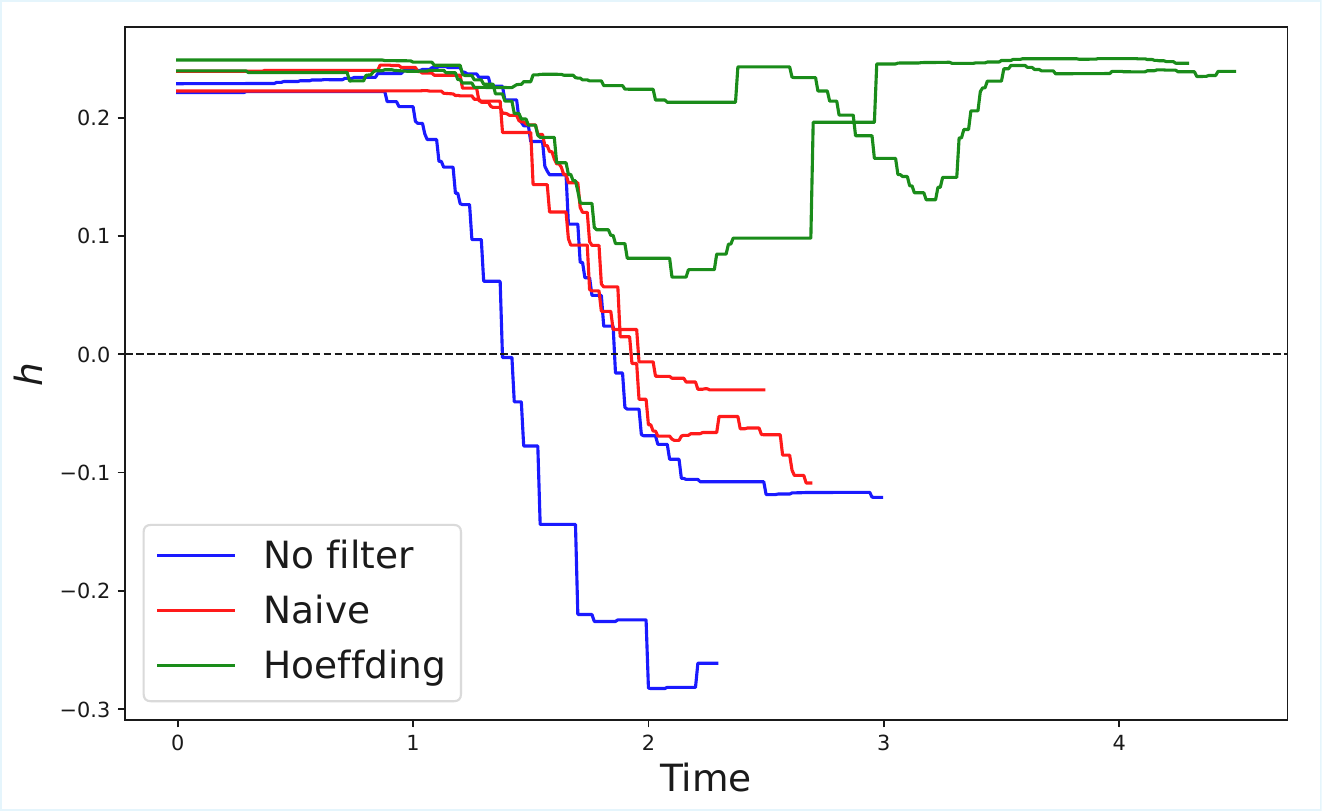}
    \caption{Evolution of the value of $h$ for two different trajectories for each of the different approaches considered in the hardware experiment.}
    \label{fig:hardware_trajectories}
\end{figure}
Results are shown in Fig. \ref{fig:hardware_trajectories}. 
We plot the trajectories induced by the nominal controller, the 
naive filter (i.e., the controller that is computed at each state as the smallest deviation from the nominal satisfying the discrete-time CBF condition assuming the disturbance is identically zero), and the Hoeffding-based approach.
The nominal controller and the naive filter
consistently violate the safety constraint as the sloped terrain pushes the robot laterally out of the safe set. In contrast, our Hoeffding-based filter substantially reduces the frequency and magnitude of constraint violations, correctly accounting for the stochastic disturbance and succeeding on 4/5 of the tested rollouts.

\section{Conclusion}
We have introduced probabilistic CBFs, which can be used to design controllers that satisfy safety constraints with a prescribed probability over a finite time horizon.
We have proposed a number of different sufficient conditions to verify that a function is a probabilistic CBF. The first set of conditions leverage knowledge of the moments of the uncertainty distribution, whereas the second set of conditions utilize realizations of the uncertainty.
We have studied the computational tractability of such conditions for its use in optimization-based control. In particular, we have implemented such controllers on a quadruped, both in simulation and hardware.
In future work, we plan to find less conservative conditions to ensure that a function is a probabilistic CBF and bridge the gap between the theoretical probabilistic safety guarantees and the empirical results.
The results in this paper also open the door to the design of safe controllers
in systems with state estimation errors induced by techniques such as the Kalman Filter or SLAM.

\section*{Acknowledgments}
The authors thank Ryan M. Bena for his help with the quadruped hardware experiment.

\bibliography{bib/alias,bib/Main-add,bib/Main,bib/JC,bib/New}
\bibliographystyle{IEEEtran}

\end{document}